\DeclareDocumentMathCommand{\DKL}{}{\D_\KL}
\DeclareDocumentMathCommand{\corMat}{}{\Psi^{\text{cor}}}
\title{On the Complexity of Sampling Redistricting Plans}
\author[1]{Moses Charikar}
\author[1]{Paul Liu}
\author[1]{Tianyu Liu}
\author[1]{Thuy-Duong Vuong}
\affil[1]{Stanford University, \url{{moses,tliu}@cs.stanford.edu,{paul.liu,tdvuong}@stanford.edu}}
\begin{document}

\maketitle
\begin{abstract}
%In the redistricting problem, it is useful to sample from the space of all partitions of the graph of census blocks into compact connected districts of equal size.
% In the redistricting problem, sampling partitions according to the balance spanning tree distribution ensures contiguity, compactness--desirable properties
A crucial task in the political  redistricting problem is to sample redistricting plans i.e. a partitioning of the graph of census blocks into districts.
    We show that ReCom [DeFord-Duchin-Solomon'21]-a popular Markov chain to sample redistricting plans-is exponentially slow mixing on a simple subgraph of $\Z_2.$ We show an alternative way to sample balanced, compact and contiguous redistricting plans using a "relaxed" version of ReCom and rejection sampling.
\end{abstract}
\section{Introduction}
Redistricting is the task of redrawing district boundaries, i.e. partitioning a set of geographical units into a fixed number of parts (districts), subject to certain constraints on the balance, contiguity and compactness of the partitioning. %--redistricting plan. 
%Drawing redistricting plan is a challenging task with major political consequences. %It is important to judge the quality of a redistricing plan to ensure that
In the United States for example, each state is divided into congressional districts, each of which elects a representative to the US House of Representatives.
The census every 10 years triggers a flurry of redistricting activity as the district boundaries and the number of districts themselves change in response to shifts in the population.
Changing district boundaries comes with consequences for political parties, candidates and voter rights. 
The power to redraw district boundaries has been abused to favor one party over another, increase/decrease the influence of groups of people and so on -- a practice known as gerrymandering.
Understanding, detecting and reasoning about gerrymandering are issues at the intersection of mathematics, geography, political science and law --
this recent book \cite{duchin2021political} gives an accessible introduction to these many dimensions.

%\June{TODO: talk more about gerrymander here}

Detecting gerrymandering in a particular redistricting proposal is a complex question and several metrics have been proposed in the literature to measure compactness of districts and fairness of plans.
In the past few years, a methodology based on MCMC sampling has emerged as a basis for evaluating redistricting plans and detecting gerrymandering \cite{DeFord2021Recombination,autry2021metropolized,herschlag2020quantifying,fifield2020automated,mccartan2020sequential}.
The idea is to draw an ensemble of plans according to a certain distribution then compare a given redistricting plan of interest against the ensemble of these sampled plans. 
The ensemble of redistricting plans gives an estimate of reasonable ranges of values for various metrics of interest and is useful in detecting outliers.
Legal arguments based on this methodology have been made in several court cases in recent years.

Many currently used redistricting algorithms work with a discrete formulation of the redistricting problem: An instance of the problem is a graph -- a dual graph of the partition into geographic units -- with vertices corresponding to units, and edges corresponding to pairs of adjacent units.
Vertices are associated with the population of the corresponding geographic unit. 
The objects of interest are partitions of this graph into connected pieces, with additional constraints: the pieces should have roughly equal size, should be compact (have relatively small boundaries), and so on.

An important issue here is the choice of distribution over partitions to sample from.
The naive approach of sampling from the uniform distribution of connected partitions yields unreasonable redistricting plans -- most of the probability mass is concentrated on partitions that are far from compact.
Moreover, there are complexity lower bounds that rule out efficient sampling from the uniform distribution \cite{cohen2020computational}.
Previous work shows that drawing plans from the \emph{spanning tree distribution} (\cref{def:spanning tree dist}) ensures desirable properties such as contiguity and compactness \cite{DeFord2021Recombination}. Intuitively speaking, the spanning tree distribution favors partitioning of the graph where the induced subgraph on each part of the partition containing many spanning trees, which ensures compactness, i.e. there are few edges across different parts in the partition  \cite[see][]{procaccia2022compact}. Hence, we are interested in sampling from the \emph{balanced spanning tree distribution} i.e. the distribution induced by the spanning tree distribution on balanced partitions i.e. partitions where each part has exactly the same number of vertices.
%Previous work identify the \text{spanning tree} distribution as 

\begin{definition}[Spanning tree distribution] \label{def:spanning tree dist}
Given a graph $ G= G(V,E)$ and parameter $k$, the {\bf spanning tree distribution} $\mu^*$ over the partitioning of the vertices of $G$ has density proportional to the product of number of spanning trees in each part of the partition
\[ \mu^*_{G,k}((P_1,\cdots, P_k)) = \prod_{i=1}^k T(P_i)\]
where $P_1 \cup \cdots, P_k = V$ is a partition of $V$ and $T(P_i)$ is number of spanning tree in $G[P_i].$

The {\bf balanced spanning tree distribution} $\mu^{\text{balanced}}_{G,k}$ is the distribution over balanced partitions (i.e. partitions $(P_1,\cdots, P_k)$ with $|P_1| = \cdots = |P_k|$) induced by $\mu^*_{G,k}.$
\end{definition}

A widely-used method to draw or sample redistricting plans is to run a Markov chain called ReCom (see \cref{def:ReCom} for the full definition). 
As opposed to the natural family of Flip chains, which make changes to single nodes at a time, ReCom makes more substantial moves in the space of partitions in each step.
ReCom iteratively merges two adjacent parts of the partition and splits them into two new parts while maintaining that each part of the partition has the same number of vertices. ReCom exhibits very good empirical performance, but so far there has been no rigorous theoretical study on the mixing time of ReCom, even on very simple planar graphs. 
While we do not have a closed form expression for the stationary distribution of ReCom, it targets the balanced spanning tree distribution $\mu^{\text{balanced}}_{G,k}$ defined above.
A reversible variant of ReCom was introduced by \cite{ReversibleReCom} who showed that the stationary distribution is the spanning tree distribution $\mu^*_{G,k}$.
Here are some open problems about the ReCom chain and the spanning tree distribution it targets:
%\begin{enumerate}
%    \item Is there a polynomial upper bound on the mixing time for the ReCom chain?
%    \item 
%\end{enumerate}
\begin{question}
\label{qn:ReComMixing}
Is there a polynomial upper bound on the mixing time for the ReCom chain?
\end{question}
\begin{question}
Can we sample efficiently from the spanning tree distribution?
\label{qn:SpanningTreeSampling}
\end{question}
\begin{question}
Can we sample efficiently from the balanced spanning tree distribution?
\label{qn:BalancedSpanningTreeSampling}
\end{question}

In this work, we answer \cref{qn:ReComMixing} in the negative and show that ReCom is very slow mixing: its mixing time is exponential in the number of vertices in the graph, even for simple planar graphs that are subgraphs of the 2-dimensional grid graph $\Z^2$ (see \cref{sec:slow mixing}).

Below we define the Markov chain ReCom to sample redistricting plans. For technical reasons, we maintain a spanning tree $T_i$ for each part $P_i$ of the partition. The Markov chain will mutate the tuple $(T_1, \cdots, T_j)$ according to some probabilistic rules for some number of steps, and output $(V(T_1), \cdots, V(T_k))$ as the graph partition.
\begin{definition}[ReCom: A Markov chain to sample redistricting plans] \label{def:ReCom}
Each step of ReCom modifies the tuple $(T_1, \cdots, T_k)$ of spanning trees as followed:
\begin{enumerate}
     \item Uniformly at random, add one edge $e = (u,v)$ that connects to spanning tree $T_i$ and $T_j$ i.e. $u\in V(T_i)$ and $v\in V(T_j)$ or vice versa. Let $T:=T_i \cup T_j \cup \set*{e}$ be the new spanning tree on $ G[V (T_i )\cup V(T_j)]$ formed by adding this edge.
    %\item Re-sample a random new spanning tree $T'$ on $ G[V (T_i )\cup V(T_j)].$ 
    \item Let $F$ be the set of edges $f\in T$ such that removing $f$ from the tree  splits the tree $T$ into two new tree $T'_i$ and $T'_j$ containing the same number of vertices. While $F \setminus \set*{e} = \emptyset$, resample a random spanning tree on $V(T_i) \cup V(T_j)$ and set $T$ to be this new tree. 
    \item Choose a random edge $f$ among $F$, remove it from $T,$ and replace $T_i$ and $T_j$ with the two newly created trees $T'_i$ and $T'_j.$
    %\item With edge $f$ from the union $T_i \cup T_j \cup \set*{e},$ creating two new tree $T'_i$ and $T'_j.$  Replace $T_i$ with $T'_i$ and $T_j$ with $T'_j.$ %Replace $T_i$ and $T_j$ create
\end{enumerate}
\end{definition}

Our slow mixing results are proved using standard conductance arguments~\cite{LP17}, namely to partition the state space of the Markov chain into at least three subsets, with one of them having exponentially small weight compared with others, and show that removing this low-weight subset disconnects the state space. As a consequence, if the Markov chain enters a subset with large weight, it can hardly go through the low-weight bottleneck to reach states in other large-weight subsets, hence mixing slowly. In the case of ReCom, the partitions in the low-weight subset will consist of parts that contain few spanning trees, e.g. a ``line segment''. We show that since ReCom only moves among balanced partitions, such low-weight subset form a geometric barrier for the high-weight subsets to be connected.

We remark that originally ReCom was defined such that in each step a spanning forest of the partition is not maintained~\cite{DeFord2021Recombination}. The version of ReCom described in \cref{def:ReCom} has been proposed in subsequent works (e.g. in the name of \emph{Forest ReCom} \cite{autry2021metropolized}) to reduce computational cost.
However, our slow mixing results in \cref{sec:slow mixing} hold, regardless of whether a spanning forest is maintained or not.

On the other hand, we answer \cref{qn:SpanningTreeSampling} positively: we show that a variant of ReCom (\cref{def:variant ReCom})  converges to the spanning tree distribution in quasi-linear steps in the number of edges in the graph. Hence, we can sample from the spanning tree distribution in nearly linear time.
Briefly speaking, this ReCom variant is simply the \emph{up-down walk} \cite{ALOVV21} on the distribution of $(n-k)$-edges forest, which is known to mix in $O(m\log m)$-step where $m$ is the number of edges in the graph and each step can be implemented in amortized $O(\log m)$ time.
In addition, we introduce variations of the spanning tree distribution that favor balanced partitions (see \cref{def:c biased spanning tree distribution}), and show efficient Markov chain(s) to sample from these distributions via Markov chain comparison techniques~\cite{DS94}. See \cref{sec:relax ReCom} for details.
\begin{definition}[$c$-biased spanning tree distribution] \label{def:c biased spanning tree distribution}
Given a graph $G= G(V,E)$ and number of partitions $k$, consider the following variant of the spanning tree distribution defined by,
\[\mu^c_k((P_1,\cdots, P_k)) = \prod_{i=1}^{k} T(P_i) \abs{P_i}^c\]
where $c \geq 0$ is a given parameter, $\abs{P_i}$ denote the number of vertices in $P_i.$
\end{definition}
Observe that for $c = 0$, the $c$-biased spanning tree distribution is precisely the spanning tree distribution. Since the product $\prod_{i=1}^k \abs{P_i}$ becomes larger as the partition becomes more balanced, the $c$-biased distribution puts the most weight on partitions where each part has exactly the same size. As $c$ approaches infinity, the $c$-biased spanning tree distribution becomes close to the balanced spanning tree distribution. Hence, the $c$-biased spanning tree distribution enforces a "soft" constraint on the balancedness of the partition.

We implement our sampling algorithms on the square grid graph and on real world graphs. We empirically verify that our intuition about $c$-biased distribution is correct: for $c$ large enough, drawing from the $c$-biased distribution is effectively the same as drawing from the balanced spanning tree distribution (see \cref{fig:partition}). We also consider an alternative approach to sample from the balanced spanning tree distribution by rejection sampling i.e. drawing multiple samples from the spanning tree distribution and accepting only the balanced partitions. This strategy is efficient when the proportion of balanced partition under the spanning tree distribution is large (i.e. at least inversely polynomial in the number of vertices). We prove that this is indeed the case (see \cref{thm:percentage of balance partition simple graph}) for our hard instance where ReCom takes exponential time to mix, showing that we can efficiently sample from the balanced spanning tree distribution in that case. Empirical evidence suggests that the proportion of balanced partitions is large when the given graph is a subgraph of the grid graph and the number of partitions is a constant\footnote{This is the case of interest for some redistricting problems. There are inherent barriers against efficiently sampling redistricting plans when $k$ is super-constant \cite[see][]{cohen2020computational}}. %We conjecture that for subgraph of the grids and constantly many partitions ($k=O(1)$), the proportion of balanced partition inversely polynomial in the
\begin{figure}[h]
\centering
           \subfloat[c=2]{%
              \includegraphics[width =0.49\textwidth]{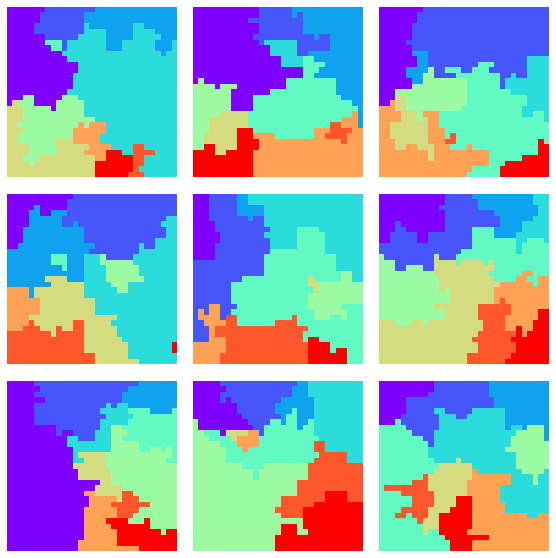}%
           } 
           \hfill
           \subfloat[c=20]{%
              \includegraphics[width=0.49\textwidth]{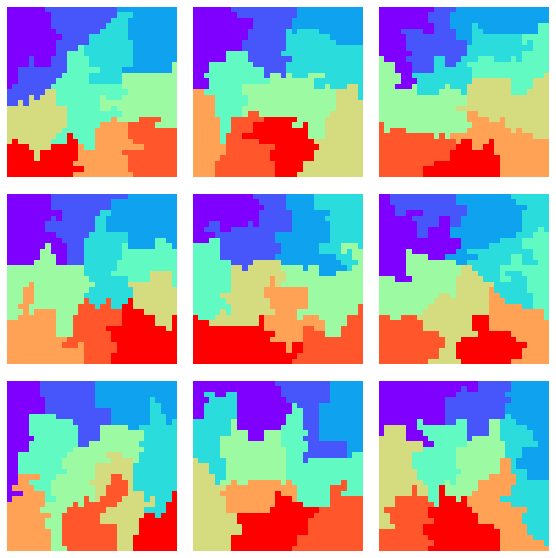}
              }
              \caption{Samples from $\mu^c$ on $30 \times 30$ grid for different parameter $c$}
           \label{fig:partition}
\end{figure}
\begin{conjecture} \label{conj:percentage of balance partition}
For the $m\times n$ grid graph, the proportion of balanced partitions under the spanning tree distribution $\mu^*_k$ is at least $1/poly(m, n)$ when $ k = O(1).$
\end{conjecture}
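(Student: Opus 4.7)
The plan is to reinterpret the ratio in the conjecture via uniformly random spanning forests and then establish concentration of component sizes. Observe that $\sum_{P} \prod_{i} T(P_i)$ equals the number $F_k(G)$ of spanning forests of $G$ with exactly $k$ connected components, because each partition $(P_1, \ldots, P_k)$ contributes $\prod_{i} T(P_i)$ choices of spanning trees for its parts, and conversely each spanning $k$-forest induces a unique ordered partition (up to a fixed ordering convention). Under this correspondence $\mu^*_k$ is the distribution of the vertex partition induced by a uniformly random $k$-component spanning forest, so the conjecture reduces to showing that a uniform random $k$-forest of $G_{m \times n}$ is balanced with probability at least $1/\mathrm{poly}(m,n)$.

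For the numerator, I would exhibit explicit families of balanced partitions that carry substantial mass. The simplest candidates are ``strip'' partitions, obtained by cutting $G_{m \times n}$ with $k-1$ equally spaced parallel lines into strips of size $m \times (n/k)$ (assuming $k \mid n$). Using Kirchhoff's matrix-tree theorem with the explicit Laplacian spectrum of the grid, one has $T(G_{a \times b}) = \exp\bigl(c\, ab + O(a+b)\bigr)$ with $c = 4G/\pi$ (Catalan's constant), so each strip partition contributes $\prod_i T(P_i) = \exp\bigl(cmn + O(m+n)\bigr)$ for $k = O(1)$. This can be broadened to ``staircase'' families of balanced partitions whose cuts are monotone lattice paths, yielding exponentially many balanced partitions with comparable spanning-tree products.

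The main obstacle is the upper bound on $F_k(G)$: one must show $F_k(G) \le \mathrm{poly}(m,n) \cdot \sum_{P \text{ balanced}} \prod_i T(P_i)$. A natural starting identity for $k=2$ is the double-counting $T(G) \cdot (mn - 1) = \sum_{F \in \mathcal{F}_2(G)} |\mathrm{cut}(F)|$, obtained by matching spanning trees with (forest, removed edge) pairs, which controls $F_2(G)$ in terms of $T(G)$ and the typical cut size. For general $k$, the generalized matrix-tree theorem expresses the count of rooted $k$-forests as a principal minor $\det(L_R)$ of the grid Laplacian indexed by a root set $R$, and summing over $R$ with appropriate overcounting corrections gives a handle on $F_k(G)$.

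The hardest step, I expect, is upgrading subexponential concentration of the size vector $(|P_1|, \ldots, |P_k|)$ around the balanced point to a polynomial lower bound on its density there. Subexponential-in-$(m+n)$ concentration should be accessible via transfer-matrix methods on grid strips or the Fortuin--Kasteleyn correspondence (spanning forests are the $q \to 0$ limit of the random cluster model), but polynomial control is considerably more delicate. A promising route is to establish log-concavity of the size-profile distribution in the spirit of recent matroid log-concavity results, using that spanning forests form the bases of a matroid; combined with the observation that the profile distribution is symmetric under permuting parts and peaks at the balanced tuple, log-concavity would yield a polynomial density at the peak as desired.
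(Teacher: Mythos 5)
What you are attempting is a conjecture that the paper itself leaves open: the paper only proves the special case where one side of the grid is constant (\cref{thm:percentage of balance partition simple graph}), and explicitly states that the general $m\times n$ case is an open problem. Your starting point coincides with the paper's: identifying $Z_{\mu^*_k}$ with the number of spanning forests with $n-k$ edges (this is exactly how \cref{lem:partition function bound} is used) and lower-bounding the balanced mass by explicit strip partitions. But for general $m,n$ this template breaks down quantitatively, and your proposal does not repair it. The issue is the boundary correction in the tree-count asymptotics: $\log T(G_{a\times b}) = c\,ab + \Theta(a+b)$, so cutting the grid into $k$ vertical strips loses a factor $\exp(\Theta(km))$ relative to $T(G_{m,n})$, while \cref{lem:partition function bound} only gives $F_k \le \mathrm{poly}(mn)\,T(G_{m,n})$. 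Hence a single strip partition (or any polynomial-sized family) cannot certify a $1/\mathrm{poly}$ fraction once $m$ grows; this is precisely why the paper's argument is confined to $m=O(1)$, where the lost factor is $\Theta(1)^k$. Your ``staircase'' remark gestures at recovering the deficit through the entropy of exponentially many balanced cut paths, but you give no estimate showing that the number of such balanced partitions with near-maximal spanning-tree product beats $\exp(\Theta(km))$, and balancing the areas on both sides of a monotone cut is itself a nontrivial constraint on the path ensemble.

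The second, and decisive, gap is your final step. Since for constant $k$ there are only $\mathrm{poly}(mn)$ possible size profiles, the conjecture is equivalent to the assertion that the balanced profile is within a polynomial factor of the modal profile of the forest-induced size distribution; so ``the profile distribution peaks at (or near) the balanced tuple'' is not an auxiliary observation you may assume, it is the entire content of the statement, and you offer no proof of it. Likewise, the log-concavity you invoke is not available in the form you need: the matroidal/Lorentzian results (e.g.\ those underlying \cref{lem:chi1} and the fast-mixing analysis in the paper) give strong Rayleigh or completely log-concave structure for the edge-indicator distribution of forests, which does not transfer to log-concavity of the induced multivariate component-size profile, and no such transfer is known. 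Your first two paragraphs are fine as far as they go (the forest correspondence and the numerator estimates are correct), but the upper-bound-versus-balanced-mass comparison and the mode/log-concavity step are exactly the open difficulty, so the proposal does not constitute a proof; at best it recovers, by a route parallel to the paper's, the regime already handled by \cref{thm:percentage of balance partition simple graph}.
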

We prove the conjecture for rectangular grid graphs with one side being a constant (see \cref{thm:percentage of balance partition simple graph}). Proving the conjecture for general rectangular grid graphs, or even the $n\times n$ grid graph, is an interesting open problem.
\paragraph{Concurrent work}
%\June{talk about Frieze's recent paper}
Very recently, \cite{FW22} prove upper and lower bound for the mixing time of another Markov chain to \emph{uniformly} sample connected partitions of subgraph of the grid graph $\Z^2.$ Their work differs from ours in several aspects:
\begin{itemize}
    \item They consider the problem of sampling from the uniform distribution over partitions of a graph into connected roughly equal-sized parts. The problem of sampling partitions into exactly equal-sized parts is known to be NP-hard even for planar graphs when $k=2$ \cite{najt2019complexity}. On the other hand, so far there has been no hardness result for sampling from the balanced spanning tree distribution on planar graphs\footnote{We note that the decision problem of whether a graph can be partitioned into exactly equal-sized parts is NP-hard for $k\ge2$~\cite{dyer1985complexity}.}. Furthermore, the uniform distribution over connected partitions does not favor compact partitions. 
    \item Their fast mixing result only applies when the number of partitions is large, which is usually not the interesting case for redistricting.
    \item They prove that a chain similar to the Flip walk\footnote{Each step of the Flip walk reassigns a single vertex from one part of the partition to another part while maintaining connectedness.} is exponentially slow to mix on $2$-partitions of a subgraph of $\Z^2.$ Our slow mixing result is for ReCom, which can make global moves, and is thus potentially much faster than the Flip walk. The Flip walk has been previously shown to be slow mixing on graph families of interest, prompting the study of ReCom \cite[see][for references]{DeFord2021Recombination}.%Exponentially slow mixing of the Flip walk to sample partitioning of general planar graph was previously shown by , motivating \cite{DeFord2021Recombination} to propose and study ReCom.
\end{itemize}
\paragraph{Organization}
We start with some mathematical preliminaries in \cref{sec:prelims}.
In \cref{sec:slow mixing}, we present simple planar graphs where ReCom has mixing time exponential in the number of vertices.
In \cref{sec:relax ReCom} we show rapid mixing for a variant of ReCom that samples from the spanning tree distribution without balance constraint. %does not enforce the balance constraint at each step.
Finally, in \cref{sec:FractionBalanced}, we discuss the fraction of balanced partitions in the spanning tree distribution.

\paragraph{Acknowledgement}
We thank Wesley Pegden and Weiming Feng for pointing out a mistake in the definition of the ReCom variant, and Gabe Schoenbach for pointing out typos in the previous version of this manuscript.
% \subsection{Previous work}
% \subsubsection{ReCom: a popular method to sample redistricting plans}
% \subsubsection{Number of spanning tree: a proxy for compactness}

% \subsection{Previous work}
% \subsubsection{ReCom: a popular method to sample redistricting plans}
% \subsubsection{Number of spanning tree: a proxy for compactness}
\section{Preliminaries}\label{sec:prelims}
For graph $G = G(V,E)$ and $U \subseteq V$ let $G[U]$ denote the subgraph induced by $G$ on $U.$

For density function $\mu:\Omega \to \R_{\geq 0},$ let $Z_{\mu}$ be its partition function i.e. \[Z_{\mu} = \sum_{\omega \in \Omega} \mu(\omega) .\]

\subsection{Markov Chains and Mixing Time}
\begin{definition}
Let $\mu, \nu$ be two discrete probability distributions over the same event space $\Omega$. The \emph{total variation distance}, or \emph{$\TV$-distance}, between $\mu$ and $\nu$ is given by
\[
\norm{\mu - \nu}_{\TV} = \frac{1}{2} \sum_{\omega \in \Omega} \abs{\mu(\omega) - \nu(\omega)}
\]
\end{definition}
\begin{definition}
Let $P$ be an ergodic Markov chain on a finite state space $\Omega$ and let $\mu$ denote its (unique) stationary distribution. For any probability distribution $\nu$ on $\Omega$ and $\epsilon \in (0,1)$, we define
\[t_\mix(P, \nu, \epsilon) = \min\set{t\geq 0 \mid \norm{\nu P^{t}- \mu}_\TV \leq \epsilon},\]
and
\[t_\mix(P,\epsilon) = \max\set*{t_\mix(P, \1_x, \epsilon) \given x\in \Omega},\]
where $\1_{x}$ is the point mass distribution supported on $x$. 
\end{definition}

We will drop $P$ and $\nu$ if they are clear from context. Moreover, if we do not specify $\epsilon$, then it is set to $1/4$. This is because the growth of $t_{\operatorname{mix}}(P,\epsilon)$ is at most logarithmic in $1/\epsilon$ (cf.~\cite{LP17}). 

The modified log-Sobolev constant of a Markov chain, defined next, provides control on its mixing time. For a detailed coverage see \cite{LP17}.
\begin{definition}
Let $P$ denote the transition matrix of an ergodic, reversible Markov chain on $\Omega$ with stationary distribution $\mu$.

\begin{itemize}
    \item The \emph{Dirichlet form} of $P$ is defined for $f,g \in \Omega \to \mathbb{R}$ by
    \[\mathcal{E}_P(f,g) = \langle f, (I-P)g \rangle_{\mu} = \langle (I-P)f, g \rangle_{\mu}.\]
    \item The \emph{modified log-Sobolev (mLSI) constant} of $P$ is defined to be
\[\rho_0(P) = \inf\set*{\frac{\mathcal{E}_P(f, \log f)}{2\cdot \text{Ent}_{\mu}[f]} \given f \colon \Omega \to \mathbb{R}_{\geq 0}, \text{Ent}_{\mu}[f]\neq 0},\]
where 
\[\text{Ent}_{\mu}[f] = \E_{\mu}{f\log f} - \E_{\mu}{f}\log\E_{\mu}{f}.\]
Note that, by rescaling, the infimum may restrict our attention to functions $f\colon \Omega \to \mathbb{R}_{\geq 0}$ satisfying $\text{Ent}_{\mu}[f]\neq 0$ and $\E_{\mu}{f} = 1$.
\item The \emph{Poincare constant} of P is defined to be
\[\alpha_0(P) = \inf\set*{\frac{\mathcal{E}_P(f,  f)}{ \text{Var}_{\mu}[f]} \given f \colon \Omega \to \mathbb{R} }\]
where
\[\text{Var}_{\mu}[f] = \E_{\mu}{f^2} - (\E_{\mu}{f})^2 = \sum_{x, y} \pi(x) \pi(y) (f(x)-f(y))^2. \]
When $P$ is reversible and the state space $\Omega$ is finite, $\alpha_0(P)$ is precisely the second eigenvalue gap i.e. $ 1-\lambda_2(P)$ where $1= \lambda_1 \geq \lambda_2 \geq \cdots \geq \lambda_{\abs{\Omega}}$ are the eigenvalues of $P.$
\end{itemize}
\end{definition}
The relationship between the modified log-Sobolev constant, the Poincare constant and mixing times is captured by the following well-known lemma. 

\begin{lemma}[cf.~\cite{bobkov2006modified}] \label{thm:mixing}
Let $P$ denote the transition matrix of an ergodic, reversible Markov chain on $\Omega$ with stationary distribution $\mu$ and let $\rho_0(P), \alpha_0(P)$ denote its modified log-Sobolev constant and Poincare constant resp. Then $ \rho_0(P) \leq \alpha_0(P)$,
% for any probability distribution $\nu$ on $\Omega$ and for any $\epsilon \in (0,1)$
% \[t_{\operatorname{mix}}(P, \nu, \epsilon) \leq \ceil*{\rho_0(P)^{-1}\cdot \parens*{\max_{x\in \Omega}\log\log\parens*{\frac{\nu(x)}{\mu(x)}} + \log\parens*{\frac{1}{2\epsilon^2}}}}.\]
% In particular,
\[t_{\operatorname{mix}}(P,\epsilon) \leq  \ceil*{ \alpha_0(P)^{-1}\cdot \parens*{\log\parens*{\frac{1}{\min_{x\in \Omega}\mu(x)}} + \log\parens*{\frac{1}{\epsilon}}} } \]
and
\[t_{\operatorname{mix}}(P,\epsilon) \leq  \ceil*{ \rho_0(P)^{-1}\cdot \parens*{\log\log\parens*{\frac{1}{\min_{x\in \Omega}\mu(x)}} + \log\parens*{\frac{1}{2\epsilon^2}}} }\]
\end{lemma}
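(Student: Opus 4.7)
The plan is to prove the three assertions of Lemma~\ref{thm:mixing} in the usual order: first the inequality $\rho_0(P) \leq \alpha_0(P)$ relating the two functional constants, then the Poincar\'e-based mixing bound, and finally the mLSI-based mixing bound. The first two are textbook arguments; the third is also classical but requires a little more care.

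For $\rho_0(P) \leq \alpha_0(P)$, I would use the standard linearization trick. Fix any $g \colon \Omega \to \R$ with $\Var_\mu[g] \neq 0$ and consider $f_\epsilon = 1 + \epsilon g$ for small $\epsilon > 0$, which is nonnegative for $\epsilon$ small enough and satisfies $\E_\mu f_\epsilon = 1 + \epsilon \E_\mu g$. Taylor expanding $f \log f$ around $1$ gives $\Ent_\mu[f_\epsilon] = \tfrac{\epsilon^2}{2}\Var_\mu[g] + O(\epsilon^3)$. Since the Dirichlet form is bilinear and $\log(1+\epsilon g) = \epsilon g + O(\epsilon^2)$, we also get $\mathcal{E}_P(f_\epsilon, \log f_\epsilon) = \epsilon^2 \mathcal{E}_P(g,g) + O(\epsilon^3)$. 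Dividing and taking $\epsilon \to 0$ yields $\rho_0(P) \leq \mathcal{E}_P(g,g)/\Var_\mu[g]$, and taking the infimum over $g$ gives the claim.

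For the Poincar\'e mixing bound, let $\nu$ be a probability distribution and let $h_t = \nu P^t / \mu$ be the relative density with respect to $\mu$. Reversibility of $P$ implies $\chi^2(\nu P^t \| \mu) = \Var_\mu[h_t] \leq (1-\alpha_0(P))^{2t} \Var_\mu[h_0]$, obtained by expanding $h_0 - 1$ in the orthonormal eigenbasis of $P$ and noting each nontrivial eigenvalue is bounded in magnitude by $1 - \alpha_0(P)$. Combining with the standard comparison $2\|\nu P^t - \mu\|_\TV \leq \sqrt{\chi^2(\nu P^t \|\mu)}$ (Cauchy--Schwarz) and specializing to the worst initial distribution $\nu = \1_x$, for which $\Var_\mu[h_0] = (1-\mu(x))/\mu(x) \leq 1/\mu(x)$, we get
\[\|\1_x P^t - \mu\|_\TV \leq \tfrac{1}{2} e^{-\alpha_0(P)\, t}/\sqrt{\mu(x)}.\]
Solving for when the right-hand side is at most $\epsilon$ yields the stated bound (the extra factor in $\log(1/\mu(x))$ is absorbed by the ceiling).

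For the mLSI mixing bound, the key step is entropy decay. Using a standard comparison between discrete and continuous time (or working directly with the continuous-time chain $e^{-t(I-P)}$ and then discretizing), one shows that along the evolution $h_t = \nu P^t/\mu$ the relative entropy satisfies $\Ent_\mu[h_t] \leq e^{-2\rho_0(P)\, t}\,\Ent_\mu[h_0]$, which follows by differentiating $\Ent_\mu[h_t]$, recognizing $\mathcal{E}_P(h_t, \log h_t)$ as the derivative, and applying the definition of $\rho_0(P)$. Then Pinsker's inequality $2\|\nu P^t - \mu\|_\TV^2 \leq \DKL(\nu P^t \,\|\, \mu) = \Ent_\mu[h_t]$, together with $\Ent_\mu[h_0] \leq \log(1/\mu(x))$ for $\nu = \1_x$, gives the mLSI mixing bound after rearranging. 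The main (minor) obstacle is handling the discrete-versus-continuous-time normalization cleanly so that the constants match those in the statement; this is where one invokes \cite{bobkov2006modified} for the precise discrete-time version of the entropy-decay step.
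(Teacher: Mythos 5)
The paper itself does not prove this lemma: it is imported from the literature (Bobkov--Tetali, Levin--Peres), so there is no internal proof to compare against and your job is effectively to reconstruct the cited argument. Your linearization proof of $\rho_0(P)\le\alpha_0(P)$ (taking $f_\epsilon=1+\epsilon g$, expanding entropy and Dirichlet form to second order) is correct and is the standard route.

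There is, however, a genuine gap in the two mixing bounds. In the Poincar\'e part, the step asserting that ``each nontrivial eigenvalue is bounded in magnitude by $1-\alpha_0(P)$'' is false: $\alpha_0(P)=1-\lambda_2(P)$ controls only the top of the spectrum, and a reversible ergodic chain can have its smallest eigenvalue arbitrarily close to $-1$ while $\alpha_0$ is of order $1$ (take the two-state chain that swaps states with probability $1-\delta$), in which case $\Var_\mu[h_t]$ does not contract at rate $(1-\alpha_0(P))^{2t}$ and the chain mixes slowly even though your bound would predict constant mixing time. The correct one-step $\chi^2$ contraction is governed by the absolute spectral gap $1-\max(|\lambda_2|,|\lambda_{|\Omega|}|)$, so the argument needs laziness or positive semidefiniteness of $P$, or else a passage to continuous time. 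The same issue recurs in the mLSI part: differentiating $\mathrm{Ent}_\mu[h_t]$ is intrinsically a continuous-time argument, and one-step entropy contraction in discrete time does not follow from the mLSI constant alone for general reversible ergodic chains (the near-swap chain is again a counterexample); you defer precisely this discretization step to the citation, but that is where the substantive content of the discrete-time statement lies. In this paper the lemma is only applied to up-down/down-up walks, whose transition matrices are positive semidefinite, so the distinction is harmless in context; but as a proof of the lemma as stated for arbitrary ergodic reversible chains, your argument (and indeed a fully literal reading of the statement) does not go through without that extra hypothesis or an absolute-gap/continuous-time formulation.
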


%\june{add modified log soblev and mixing time bound}
% \begin{theorem}[cf. \cite{LP17}] \label{thm:mixing}
% If an irreducible aperiodic Markov chain with stationary distribution $\mu$ and transition matrix $P$ satisfies 
% $\norm{P^t (S, \cdot) - \mu}_{TV}  \leq 1/4$ for all $S \in \supp(\mu)$ and some $t \geq 1$, then for any $\epsilon \in (0, 1/4],$
% \[t_{\text{mix}}(P, \epsilon) \leq t \log(1/\epsilon).\]
% \end{theorem}

The conductance\footnote{also known as bottleneck ratio in \cite{LP17}} of a subset $S$ of states in a Markov chain is
\[\Phi(S) = \frac{Q(S, \Omega \setminus S)}{\mu(S)}\]
where  $Q(S, \Omega\setminus S) = \sum_{x\in S, y \in \Omega\setminus S} \mu(x) P(x,y)$ is the ergodic flow between $S$ and $\Omega\setminus S,$ and $\mu(S) = \sum_{x\in S} \mu(x).$ The conductance of a Markov chain is defined as the minimum conductance
over all subsets $S$ with $\mu(S) \leq 1/2,$ i.e.,
\[\Phi = \min_{S: \mu(S) \leq 1/2} \Phi(S).\]

% \begin{theorem}[Cheeger's Inequality {\cite[see, e.g.,][Thm. 13.10]{LP17}}] \label{thm:cheeger}
%  Let $\lambda_2$ be the second largest eigenvalue of the transition matrix $P$,  then
%  \[ \frac{\Phi^2}{2}\leq 1-\lambda_2 \leq 2\Phi. \]
% \end{theorem}
% \June{we didn't need mixing time so I just delete this part}

 We can lower bound the mixing time by conductance as follow.
\begin{theorem}[{\cite[see, e.g.,][Thm. 7.4]{LP17}}]
For a reversible Markov chain $P$ with conductance $\Phi,$ we have 
%\June{TODO: this doesn't look right, double check the lower bound of mixing time by conductance}
\[ t_{\mix}(P, 1/4) \geq \frac{1}{4 \Phi}.\]
\end{theorem}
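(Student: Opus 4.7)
The plan is to exhibit a distribution from which the chain escapes slowly, producing a non-negligible total variation gap with the stationary distribution. First I would pick a set $S \subseteq \Omega$ attaining the minimum in the definition of $\Phi$, so that $\mu(S) \leq 1/2$ and $Q(S, \Omega \setminus S) = \Phi \cdot \mu(S)$, and I would use the conditioned distribution $\nu$ defined by $\nu(x) = \mu(x)/\mu(S)$ for $x \in S$ and $\nu(x) = 0$ otherwise as the starting distribution.

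The key lemma is that the escape probability grows at most linearly in time: for every $t \geq 0$, one has $(\nu P^t)(\Omega \setminus S) \leq t \Phi$. To prove this, I would introduce the density $f_t = (\nu P^t)/\mu$ and observe that reversibility gives $f_{t+1} = P f_t$, so that the averaging property of $P$ yields $\norm{f_t}_\infty \leq \norm{f_0}_\infty = 1/\mu(S)$ for all $t$. Consequently, the one-step outflow from $S$ at time $t$ is bounded by
\[ \sum_{x \in S,\, y \notin S} (\nu P^t)(x)\, P(x,y) \;\leq\; \frac{1}{\mu(S)} \sum_{x \in S,\, y \notin S} \mu(x)\, P(x,y) \;=\; \frac{Q(S, \Omega \setminus S)}{\mu(S)} \;=\; \Phi, \]
and summing these one-step bounds over the first $t$ steps gives the linear escape bound.

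To conclude, since $\mu(\Omega \setminus S) \geq 1/2$ and $(\nu P^t)(\Omega \setminus S) \leq t\Phi$, taking $\Omega \setminus S$ as the test set in the definition of $\TV$ distance yields
\[ \norm{\nu P^t - \mu}_{\TV} \;\geq\; \mu(\Omega \setminus S) - (\nu P^t)(\Omega \setminus S) \;\geq\; \frac{1}{2} - t\Phi, \]
which strictly exceeds $1/4$ whenever $t < 1/(4\Phi)$. Because $\nu$ is a convex combination of the point masses $\1_x$ for $x \in S$ and the $\TV$ distance to a fixed measure is convex, there must exist some $x \in S$ with $\norm{\1_x P^t - \mu}_{\TV} \geq \norm{\nu P^t - \mu}_{\TV} > 1/4$, forcing $t_{\mix}(P, 1/4) \geq 1/(4\Phi)$.

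The main obstacle is the density bound $\norm{f_t}_\infty \leq 1/\mu(S)$; this is precisely where reversibility enters, since it identifies the evolution of the density with the action of the self-adjoint Markov operator $P$ on $L^\infty(\mu)$, from which the $L^\infty$ contractivity follows immediately. Everything else amounts to bookkeeping with flows and the variational formula for $\TV$ distance.
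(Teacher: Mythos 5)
Your proof is correct. The paper does not prove this bound itself---it simply cites it as Theorem 7.4 of \cite{LP17}---and your argument (start from $\mu$ conditioned on a bottleneck set $S$, bound the escape probability after $t$ steps by $t\Phi$ via the $L^\infty$ bound on the density $f_t$, then compare against $\mu(\Omega\setminus S)\geq 1/2$ and use convexity of total variation over starting point masses) is essentially the standard bottleneck-ratio proof given there; the only cosmetic remark is that reversibility is not truly needed, since the identity $f_{t+1}=Pf_t$ can be replaced by $f_{t+1}=\hat{P}f_t$ where $\hat{P}(y,x)=\mu(x)P(x,y)/\mu(y)$ is the time reversal, which is still stochastic, so stationarity alone suffices for the $L^\infty$ contraction.
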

% \subsection{Comparing Markov chains}

% \June{TODO: comparison theorem for mixing/MLSI of Markov chain. See DS94}
% \June{This is to compare mixing time of $\mu^c$ vs $\mu \equiv \mu^0$}

% \section{ReCom is torpidly mixing}
% We describe a family of natural planar graphs with bounded degree, on which ReCom for $3$-partition takes at least exponential time in the number of vertices to mix. Furthermore, we can modify the proof slightly to show that ReCom is similarly slow-mixing on a family of subgraph of the grid graph $\Z^2.$
% \begin{definition}{Double-cycle}

% \end{definition}
% \begin{definition}{Grid-with-a-hole}
% \end{definition}
% \begin{theorem}
% ReCom double cycle mix slow
% \end{theorem}
% \begin{theorem}
% ReCom on grid with hole mix slow
% \end{theorem}

%================= ReCom torpid mixing (starts) ==========

\section{ReCom is torpidly mixing} \label{sec:slow mixing}
Observe that in the simple case of sampling $k$-partitions on a cycle graph (assume they exist), the ReCom chain is frozen no matter where it starts from. That is to say, on such ``single-cycle'' graphs, the transition graph of ReCom is no longer strongly connected. This property of Markov chain is often referred to as being \emph{reducible}, and is undesirable for the purpose of sampling.
Even if a Markov chain is irreducible on a graph, in practice it is not efficient to sample from the stationary distribution (approximately) if the mixing time is exponential in the size of the input.

In this section, we describe families of natural planar graphs with bounded degree, on which ReCom for $3$-partition takes at least exponential time to mix, although being irreducible. In particular, we show that ReCom is torpidly mixing on a family of subgraphs of the grid graph $\Z^2.$

\begin{definition}[Double-cycle graphs]
A graph $G = (V, E)$ is called a \emph{double-cycle graph} (of length $n$) if $|V| = 2n$ for some $n > 0$, and we can write $V = L \cup R$ where $L = \{l_0, l_1, \dots, l_{n-1}\}$ and $R = \{r_0, r_1, \dots, r_{n-1}\}$ such that $E = E_L \cup E_= \cup E_R$ where $E_L = \{\{l_i, l_{i+1 \pmod n}\} \mid 0 \le i \le n - 1\}$, $E_R \{\{r_i, r_{i+1 \pmod n}\} \mid 0 \le i \le n - 1\}$, and $E_= = \{\{l_i, r_i\} \mid 0 \le i \le n-1\}$.
\end{definition}
\begin{definition}[Grid-with-a-hole graphs]
The \emph{$m \times n$ grid graph} $G_{m, n}$ can be defined as $G_{m, n} = (V, E)$ where $V = \{(i, j) \mid i \in [m], j \in [n]\}$ and $E = \{\{(i, j), (i+1, j)\} \mid i \in [m-1], j \in [n]\} \cup \{\{(i, j), (i, j+1)\} \mid i \in [m], j \in [n-1]\}$.
The \emph{$m \times n$ grid-with-a-hole graph} $G^\square_{m, n}$ ($m, n \ge 4$) can be defined as $G^\square_{m, n} = G_{m, n} \setminus \{(i, j) \mid 3 \le i \le m-2, 3 \le j \le n-2\}$.
\end{definition}

\begin{theorem}\label{thm:double-cycle}
ReCom for 3-partitions mixes torpidly on the double-cycle graphs.
\end{theorem}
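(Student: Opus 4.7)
The plan is to establish that ReCom is torpidly mixing by identifying an exponentially small bottleneck in the state space of balanced connected $3$-partitions of the double-cycle graph $G$. Since the double-cycle of length $n$ is topologically cylindrical (two length-$n$ cycles joined by $n$ rungs) and each of the three parts has exactly $2n/3<n$ vertices, no part can cover either of the cycles $L$ or $R$ in full. Every part is therefore arc-like on the cylinder, and the balanced connected $3$-partitions split into distinct topological classes according to how the three parts wind around the cylinder.

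I would classify these partitions by the cyclic order in which the three parts appear around the cylinder and by which of the rung edges in $E_=$ are cut, and take $S$ to be the topological class containing the canonical \emph{vertical-strip} partitions, whose three parts are contiguous $2\times(n/3)$ sub-ladders. These strip partitions have the minimum possible number ($6$) of boundary edges, while $\bar S$ consists of topologically inequivalent partitions (for instance, ones where a single part wraps further around the cylinder than any strip of length $n/3$ would permit). The main calculation is then to analyze a single step of ReCom starting from $\pi\in S$: after the merge, the resulting region is essentially a $2\times(2n/3)$ ladder, and I would show that the set $F$ of balanced cut edges in the spanning tree $T_i\cup T_j\cup\{e\}$ is very constrained. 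Producing a balanced cut whose removal yields a partition in $\bar S$ forces the spanning tree to contain a specific non-local configuration (intuitively, a long zig-zag across many columns), and such a configuration arises only with probability $e^{-\Omega(n)}$ under both $T_i\cup T_j\cup\{e\}$ and the uniformly resampled spanning tree produced inside the while-loop of step 2 of the ReCom definition.

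The hardest part will be twofold. First, the stationary distribution of ReCom is not known in closed form (only its target $\mu^{\text{balanced}}_{G,3}$ is), so lower-bounding the stationary mass of $S$ and of $\bar S$ requires indirect arguments, either via the natural symmetries of the double-cycle under rotation and reflection of the two cycles, or via comparison with the reversible variant of \cite{ReversibleReCom} whose stationary distribution is the spanning tree distribution. Second, because ReCom is not generically reversible, I cannot directly invoke the conductance lower bound $t_{\mix}\geq 1/(4\Phi)$ from the preliminaries, and will instead fall back on a hitting-time argument: show that for any $\pi_A\in S$ and any fixed $\pi_B\in \bar S$, ReCom started at $\pi_A$ reaches $\pi_B$ only with probability $o(1)$ in fewer than $e^{\Omega(n)}$ steps, which suffices to lower-bound $t_{\mix}$ once $\bar S$ is known to carry non-negligible stationary mass.
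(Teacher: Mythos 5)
Your proposal has the right high-level shape (find a bottleneck, use symmetry to show the far side has substantial mass, fall back on an escape/hitting-time argument since the stationary distribution of ReCom is not known in closed form), but it is missing the two ideas that actually make the bottleneck exist, and the substitute you offer does not hold up. First, the correct bottleneck set is not a ``topological class boundary'': it is the set $\mathcal{C}$ of $3$-partitions in which some part contains \emph{no} rung edge of $E_=$, i.e.\ lies entirely inside $L$ or inside $R$. Such a part is a path, so its spanning-tree weight is $1$, whereas a part straddling both cycles with $\ell$ rungs has weight growing like $(2+\sqrt{3})^{\ell}$; since there are only polynomially many states in $\mathcal{C}$ (their ``gap configurations'' are polynomially enumerable), $\mu(\mathcal{C})$ is exponentially small compared with the rung-complete partitions. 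Second, what forces the chain through $\mathcal{C}$ is a \emph{conserved quantity}, not a topological obstruction: away from $\mathcal{C}$, a single ReCom merge-split moves the $E_L$-gap and the $E_R$-gap between the two merged parts by the same amount in opposite directions, so the sum (equivalently the average) of gap positions is invariant. This is what partitions $\Omega\setminus\mathcal{C}$ into many mutually unreachable sectors, each of which (by rotational symmetry) has mass at most $1/2$, and whose only exits are through the exponentially light set $\mathcal{C}$.

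Your classification does not deliver this. Three balanced connected parts, each of which meets both cycles, always sit on the cylinder as three arcs in cyclic order, so ``how the parts wind'' does not distinguish exponentially separated regions; and if you instead define the class by which rungs are cut, a single ReCom step from a vertical-strip partition typically produces a slanted interface that cuts different rungs, so the chain leaves your class with constant (not $e^{-\Omega(n)}$) probability per step. Relatedly, your claim that escaping requires a ``long zig-zag'' spanning-tree configuration of probability $e^{-\Omega(n)}$ misplaces where the exponential smallness comes from: in the correct argument it is the \emph{stationary mass} of the bottleneck states that is exponentially small, not the one-step transition probabilities. Indeed, from a state in which one part already hugs a single cycle almost entirely (a state of tiny but positive mass, reachable inside your class), the one-step probability of producing a part confined to $L$ or $R$ and then re-splitting into a different sector is not small, so no uniform per-step escape bound over your class can hold; any hitting-time version of the argument still has to weight escape probabilities by stationary mass, which brings you back to the conductance-style computation $\Phi(\Omega_{\text{LEFT}})\le \mu(\mathcal{C})/\mu(A_0)$ that your proposal never sets up.
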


At a high level, to prove the torpid mixing of a Markov chain we can use the following strategy:
(1) partition the state space $\Omega$ into three disjoint subsets $\Omega_{\text{LEFT}} \cup \Omega_{\text{MID}} \cup \Omega_{\text{RIGHT}}$;
(2) show that in order to go from states in $\Omega_{\text{LEFT}}$ to states in $\Omega_{\text{RIGHT}}$, the Markov chain has to go through the ``middle states'' $\Omega_{\text{MID}}$; and
(3) demonstrate that $\mu(\Omega_{\text{MID}})$ is exponentially small (compared with $\min(\mu(\Omega_{\text{LEFT}}), \mu(\Omega_{\text{RIGHT}}))$) in the input size.
This means that starting from any state in $\Omega_{\text{LEFT}}$, the probability of going through $\Omega_{\text{MID}}$ (and consequently to any state in $\Omega_{\text{RIGHT}}$ and reach stationarity) is exponentially small. Hence the conclusion of torpid mixing.

Following this strategy, in the proof of Theorem \ref{thm:double-cycle} we show how to decompose the state space of 3-partitions for any double-cycle graph $G = (V=L \cup R, E = E_L \cup E_= \cup E_R)$. Here we assume $|L| = |R| = 3n$ for some $n$, which is necessary for the existence of 3-partitions. This implies $|V| = 6n$, and the size of each component is $2n$.
The intuition is as follows.

For a 3-partition $X = (X_1, X_2, X_3)$, let $ \phi(X) = \sum_{i=1}^3 |E_= \cap E(X_i)|$.
The stationary distribution $\mu$ (favoring partitions with more spanning trees) on a partition $X$ is roughly proportional to $r^{\phi(X)}$ for some constant $r > 2$.
In particular, 3-partitions that possess the highest weights are exactly those containing all the edges in $E_=$. Note that there are $n$ such states.
The key observation is that in order to go from any highest-weight state to another, ReCom has to go through a state $X = (X_1, X_2, X_3)$ with at least one of the $X_i$'s being a ``line segment'' completely in $L$ or completely in $R$.
As a consequence, any such state $X$ has $\phi(X) \le n$ and thus has relatively small weights (exponentially small compared with the highest weight). More importantly, the number of these states can be upper bounded by some polynomial in $n$.
Next we formalize this idea.

\begin{proof}[Proof of Theorem \ref{thm:double-cycle}]
For any $0 \le j \le n-1$, let $E_L^j = E_L \setminus \{\{l_{i*n + j - 1 \pmod {3n}}, l_{i*n + j}\} \mid i = 0, 1, 2\}$ and similarly $E_R^j = E_R \setminus \{\{r_{i*n + j - 1 \pmod {3n}}, r_{i*n + j}\} \mid i = 0, 1, 2\}$.
For $0 \le j \le n-1$, denote by $A_j$ the 3-partition $A_j = \left(V, E_L^j \cup E_= \cup E_R^j \right)$.
For edges in $E_L$ and $E_R$, label the edge $\{l_{j-1 \pmod {3n}}, l_{j}\}$ by $j$ for any $0 \le j \le 3n-1$. The labels are not identifier but just position indicators.
Then we can view $A_j$ as a graph with all the edges but edges with labels $j$, $n+j$, $2n+j$ missing in both $E_L$ and $E_R$.
For simplicity, let us call any missing edge in $E_L$ or $E_R$ a ``gap''.
Note that the number of gaps a 3-partition $X$ has will be 4, 5, or 6, if the number of line component in $X$ is 2, 1, or 0, respectively.
Define the \emph{average gap position} of a 3-partition to be the sum of all gap positions divided by the number of gaps, 
and denote by $\mathcal{B}_x$ the set of 3-partitions whose average gap position is $x$.
Note that $A_j \in \mathcal{B}_{n+j}$ for $0 \le j \le n-1$. 
Let $\mathcal{C} = \{C=(C_1, C_2, C_3) \mid E(C_i) \cap E_= = \emptyset \text{ for some } i\}$.  

Let $\Omega_{\text{LEFT}} = \mathcal{B}_n \setminus \mathcal{C}$, $\Omega_{\text{MID}} = \mathcal{C}$, and $\Omega_{\text{RIGHT}}$ be the rest of the $\Omega$. In particular, for $1 \le j \le n-1$, $A_j \in \mathcal{B}_{n+j} \setminus \mathcal{C} \subseteq \Omega_{\text{RIGHT}}$.

%Observe that $\mathcal{C} \subset \Omega_{\textup{MID}}$.
%Let $\Omega_{\text{LEFT}} = \{A\}$, $\Omega_{\textup{MID}} = \set{X \mid \phi(X) \le n}$, and $\Omega_{\text{RIGHT}}$ be the rest of the $\Omega$.
%We will show that to get from $A$ to $B_j$ for any $j$, we need to pass through a state in $\mathcal{C}$.

First, we argue that removing $\Omega_{\text{MID}}$ from the state space $\Omega$ results in a disconnected graph.
The idea is to argue that, starting with any state $\Omega_{\text{LEFT}} = \mathcal{B}_n \setminus \mathcal{C}$, the chain can only go to states with average gap position exactly equal to $n$, without entering any states in $\Omega_{\text{MID}} = \mathcal{C}$.

Suppose we are taking a single step in ReCom from $X = (X_1, X_2, X_3)$ to $Y = (Y_1, Y_2, Y_3)$, both without line components, i.e. $X, Y \not\in \mathcal{C}$. W.l.o.g we can assume that after one step of ReCom, $X_1$ and $X_2$ are merge-split into $Y_1$ and $Y_2$, i.e. $X_3 = Y_3$. Then the only possible moves have the following pattern: take $k$ (some integer) edges in $E_L$ from $X_1$ to $X_2$, while take $k$ edges in $E_R$ from $X_2$ to $X_1$, with the restriction that the $E_L$-side gap and $E_R$-side gap between $X_1$ and $X_2$ move in the opposite directions but with the same amount. Therefore, the average gap position won't be changed.
Note that it is not hard to see that going into any states in $\mathcal{C}$ would easily shift the average gap position.

Second, we show that the conductance $\Phi$ is inversely exponentially small.
Given that $\Omega_{\text{LEFT}} = \mathcal{B}_n \setminus \mathcal{C}$ is only connected to the states in $\Omega_{\text{MID}}=\mathcal{C}$, we have
\[Q(\Omega_{\text{LEFT}}, \Omega \setminus \Omega_{\text{LEFT}}) = 
\sum_{\substack{x \in \Omega_{\text{LEFT}}\\ y \in \Omega\setminus \Omega_{\text{LEFT}}}} \mu(x) P(x,y)
= \sum_{\substack{x \in \Omega_{\text{LEFT}} \\ y \in \Omega_{\text{MID}}}} \mu(x) P(x,y)
= \sum_{\substack{x \in \Omega_{\text{LEFT}} \\ y \in \Omega_{\text{MID}}}} \mu(y) P(y,x)
\le %\sum_{y \in \Omega_{\text{MID}}} \mu(y)
\mu(\Omega_{\text{MID}}).\]
Since $\mu(\mathcal{B}_n) = \mu(\mathcal{B}_{n+j})$ and $\mathcal{B}_{n+j} \in \Omega_{\text{RIGHT}}$ for any $1 \le j \le n-1$, we know that $\mu(\Omega_{\text{LEFT}}) = \mu(\mathcal{B}_n \setminus \mathcal{C}) \le \frac{1}{2}$, and consequently
\[
\Phi = \min_{S: \mu(S) \leq 1/2} \Phi(S) \le \Phi(\Omega_{\text{LEFT}}) = \frac{Q(\Omega_{\text{LEFT}}, \Omega \setminus \Omega_{\text{LEFT}})}{\mu(\Omega_{\text{LEFT}})} \le \frac{\mu(\Omega_{\text{MID}})}{\mu(\Omega_{\text{LEFT}})} \le \frac{\mu(\mathcal{C})}{\mu(A_0)}.
\]
Thus it suffices to show that
$\frac{\mu(\mathcal{C})}{\mu(A_0)}$ is inversely exponentially small in $n$.

Denote by $T(G)$ the number of spanning trees/forests in graph $G$.
From \cite{Daoud14}, we know that for the $2 \times n$ grid graph $G_{2,n}$, $c(a^n-1)\leq T(G_{2,n}) \leq c a^n$ for some constants $a = 2 + \sqrt{3} > 3$ and $c > 0$.
It is not hard to see that for any state $X \in \mathcal{C}$ such that $\phi(X) \le n$, $T(X) \le  p(n) T(G_{2, n}) = O(p(n)a^n)$ for some polynomial $p(\cdot)$, whereas $T(A_0) = T(G_{2,n})^3 = \Omega(a^{3n})$.
Moreover, the number of states in $\mathcal{C}$ is also polynomially upper-bounded in $n$. One can see this by polynomially upper-bounding the number of possible ``gap configurations'', and note that each state has a unique gap configuration.
Therefore, $\frac{\mu(\mathcal{C})}{\mu(A_0)} \le |\mathcal{C}| \min_{X \in \mathcal{C}}\frac{\mu(X)}{\mu(A_0)} = O(\exp(-\theta(n)))$ for some polynomial $\theta(\cdot)$.
\end{proof}

\begin{theorem} \label{thm:grid with hole slow}
ReCom for 3-partitions mixes torpidly on the grid-with-a-hole graphs.
\end{theorem}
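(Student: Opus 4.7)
The plan is to mimic the proof of \cref{thm:double-cycle}, exploiting the fact that a grid-with-a-hole is topologically an annulus, much like a double-cycle graph. Any balanced 3-partition of $G^\square_{m,n}$ into connected pieces must slice the annular rim at three locations, producing three arc-like parts whose cut positions around the hole play the role of the gap positions in the double-cycle analysis. I assume $m, n$ are chosen so that a balanced 3-partition exists and the four sides of the rim are long enough that all three cuts can be taken on distinct sides.

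First I would identify a family of high-weight canonical partitions $A_0, A_1, \ldots, A_{N-1}$, parameterized by the rotational position of the cuts around the annulus, in which each of the three parts is a ``fat'' arc containing both the inner and outer tracks of the rim along its entire extent. These dominate the spanning-tree weight, since each part looks like a $2 \times \ell$ grid and thus has $\Theta(a^\ell)$ spanning trees for $a = 2+\sqrt{3}$ by \cite{Daoud14}, whereas any line-shaped piece has only one spanning tree. On this set of fat partitions I would introduce an invariant $\bar\phi(X)$ equal to the mean cut position, well-defined after fixing an origin on the annulus.

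Next I would define $\Omega_{\text{MID}}$ to be the set of 3-partitions in which at least one part is ``thin'' in the sense that there is some location along its arc where only one of the two tracks of the rim is used. The key lemma would state that any single ReCom step between two partitions in $\Omega \setminus \Omega_{\text{MID}}$ either preserves $\bar\phi$ exactly or passes through $\Omega_{\text{MID}}$. This holds because merging two adjacent fat arcs into a thick strip and re-splitting into equal halves forces a balanced cut whose shift on the outer track is cancelled by an equal-and-opposite shift on the inner track, preserving the sum of cut positions; any split that would break this symmetry necessarily carves out a thin segment. Setting $\Omega_{\text{LEFT}} = \{X : \bar\phi(X) = \bar\phi(A_0)\} \setminus \Omega_{\text{MID}}$ and $\Omega_{\text{RIGHT}} = \Omega \setminus (\Omega_{\text{LEFT}} \cup \Omega_{\text{MID}})$, a conductance argument identical to the one in the proof of \cref{thm:double-cycle} then yields $\Phi \le \mu(\Omega_{\text{MID}})/\mu(A_0)$.

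Finally I would upper-bound $\mu(\Omega_{\text{MID}})/\mu(A_0)$ by an exponentially small quantity. Each $X \in \Omega_{\text{MID}}$ contains a thin segment of length $\Omega(n)$ contributing only $1$ spanning tree, whereas the corresponding fat arc in $A_0$ contributes $\Theta(a^{\Omega(n)})$; combined with a polynomial bound $|\Omega_{\text{MID}}| \le \mathrm{poly}(m,n)$ obtained by counting possible gap configurations, this gives $\Phi \le \exp(-\Omega(m+n))$ and hence torpid mixing via the conductance lower bound on mixing time. The main obstacle will be the key lemma above: it requires a careful case analysis of the possible shapes of merge-split moves between two adjacent fat arcs on the annulus, since the boundary between them is a short path of edges rather than a single rung, giving the merge-split operation more geometric freedom than in the double-cycle case and making it non-trivial to verify that every non-$\bar\phi$-preserving move must produce a thin part.
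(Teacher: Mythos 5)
Your overall architecture (annulus analogy, gap-position invariant, conductance bound) matches the paper's, but two of your specific choices break the argument. First, your key lemma claims that any ReCom move avoiding $\Omega_{\text{MID}}$ preserves the mean cut position \emph{exactly}, and you define $\Omega_{\text{LEFT}}$ as an exact level set of $\bar\phi$. On $G^\square_{m,n}$ this is false: unlike the double-cycle graph, the inner and outer rectangles have different perimeters, so the inner and outer gaps of a pair cannot move in perfect lockstep when a gap crosses a corner, and the ``equal-and-opposite'' cancellation you invoke fails there. The paper's proof of \cref{thm:grid with hole slow} confronts exactly this: it introduces an uneven labeling of the corner edges so that a single corner crossing shifts the average gap position by at most $2/6=1/3$, argues via a geometric obstruction that only $O(1)$ corner crossings can occur without entering $\mathcal{C}$, and therefore replaces the level-set definition by taking $\Omega_{\text{LEFT}}$ to be the connected component of a maximum-weight state $A$ in $\Omega\setminus\mathcal{C}$, exhibiting a symmetric state $A'$ with $x_{A'}-x_A=\Omega(n)$ to guarantee $\mu(\Omega_{\text{RIGHT}})\ge\mu(\Omega_{\text{LEFT}})$. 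With your exact level set, moves that drift by a corner amount give ergodic flow from $\Omega_{\text{LEFT}}$ directly into $\Omega_{\text{RIGHT}}$ that bypasses $\Omega_{\text{MID}}$, so the bound $Q(\Omega_{\text{LEFT}},\Omega\setminus\Omega_{\text{LEFT}})\le\mu(\Omega_{\text{MID}})$ no longer holds.

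Second, your $\Omega_{\text{MID}}$ (some part thin at \emph{some} location) is far too large. Any partition in which an inner cut is misaligned from its outer cut by even one unit is ``thin'' somewhere, yet its weight is within a constant factor of the maximum; such states are the bulk of the state space, so $\mu(\Omega_{\text{MID}})$ is not exponentially small, and your assertion that every state in $\Omega_{\text{MID}}$ contains a thin segment of length $\Omega(n)$ is false. The paper's middle set $\mathcal{C}$ instead consists of partitions in which at least one \emph{entire} component lies in the inner or outer rectangle; this forces that component to be a tree-like segment with a single spanning tree, caps the rung count of the whole partition at roughly half its maximum, and yields both the exponential weight gap against $\mu(A)$ and the polynomial bound on $|\mathcal{C}|$ via gap configurations. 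To repair your proof you would need to adopt (or rediscover) both of these choices: the ``whole component confined to one rectangle'' definition of the middle set, and the bounded-drift-plus-symmetric-state argument in place of exact conservation.
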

\begin{proof}
The grid-with-a-hole graphs are similar to the double-cycle graphs, except at the four corners.
To make sure the labels of ``non-corner'' edges of the ``inner rectangle'' could align with those of the ``outer rectangle'', we label the corner edges unevenly.
Specifically, in contrast to the double-cycle graphs where the labels of two neighboring edges always differ by 1 (mod cycle length), here we let the labels of two neighboring corner edges differ by 2 in the inner rectangle, and by 0 in the outer rectangle. See Figure~\ref{fig:grid_label} for an example.

\begin{figure}[h!]
\centering
\begin{subfigure}[b]{0.315\linewidth}
\centering\includegraphics[width=\linewidth]{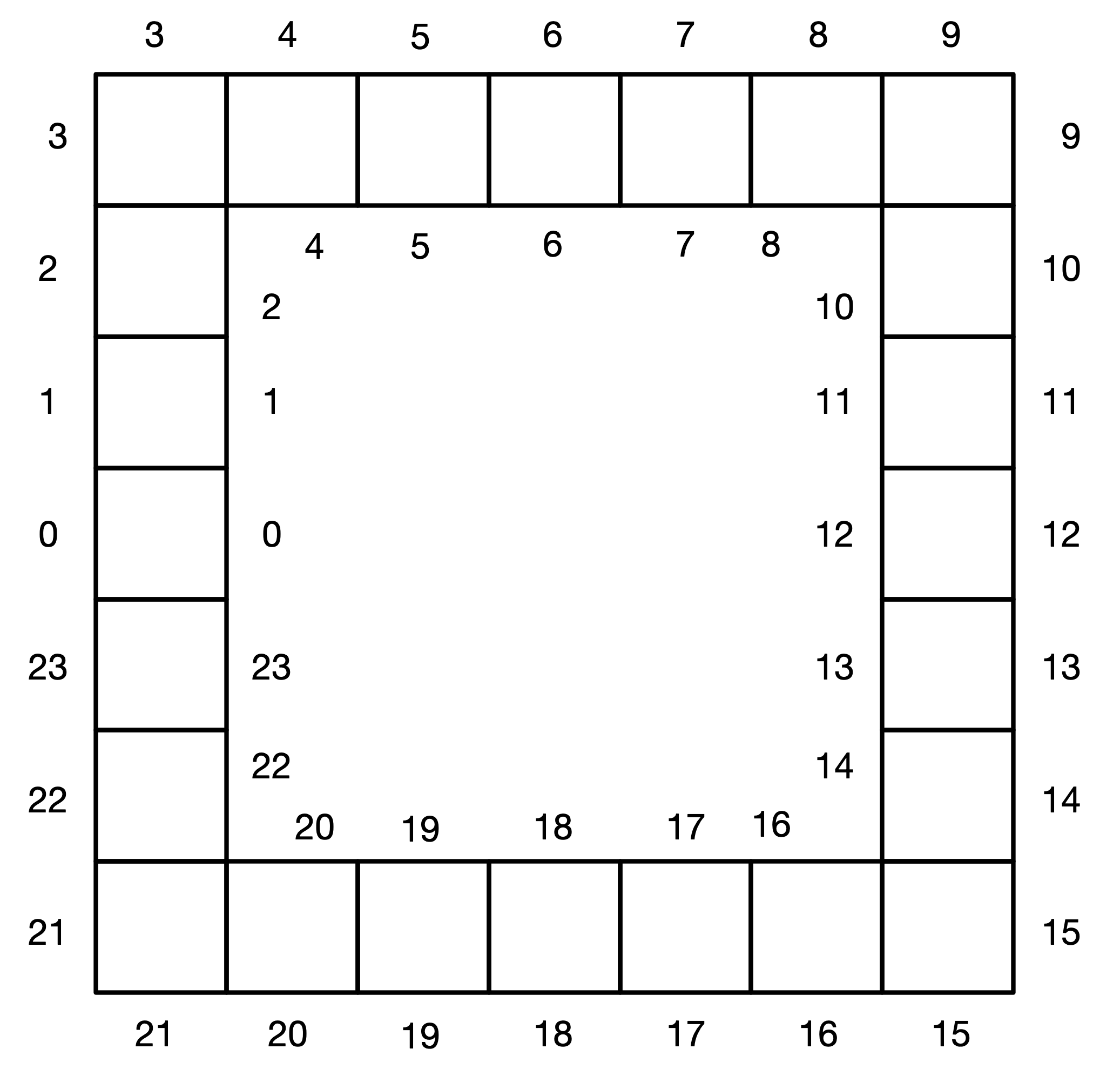}\caption{$G^\square_{8, 8}$}\label{fig:grid_label}
\end{subfigure}
\hspace{0.01\linewidth}
\begin{subfigure}[b]{0.315\linewidth}
\centering\includegraphics[width=\linewidth]{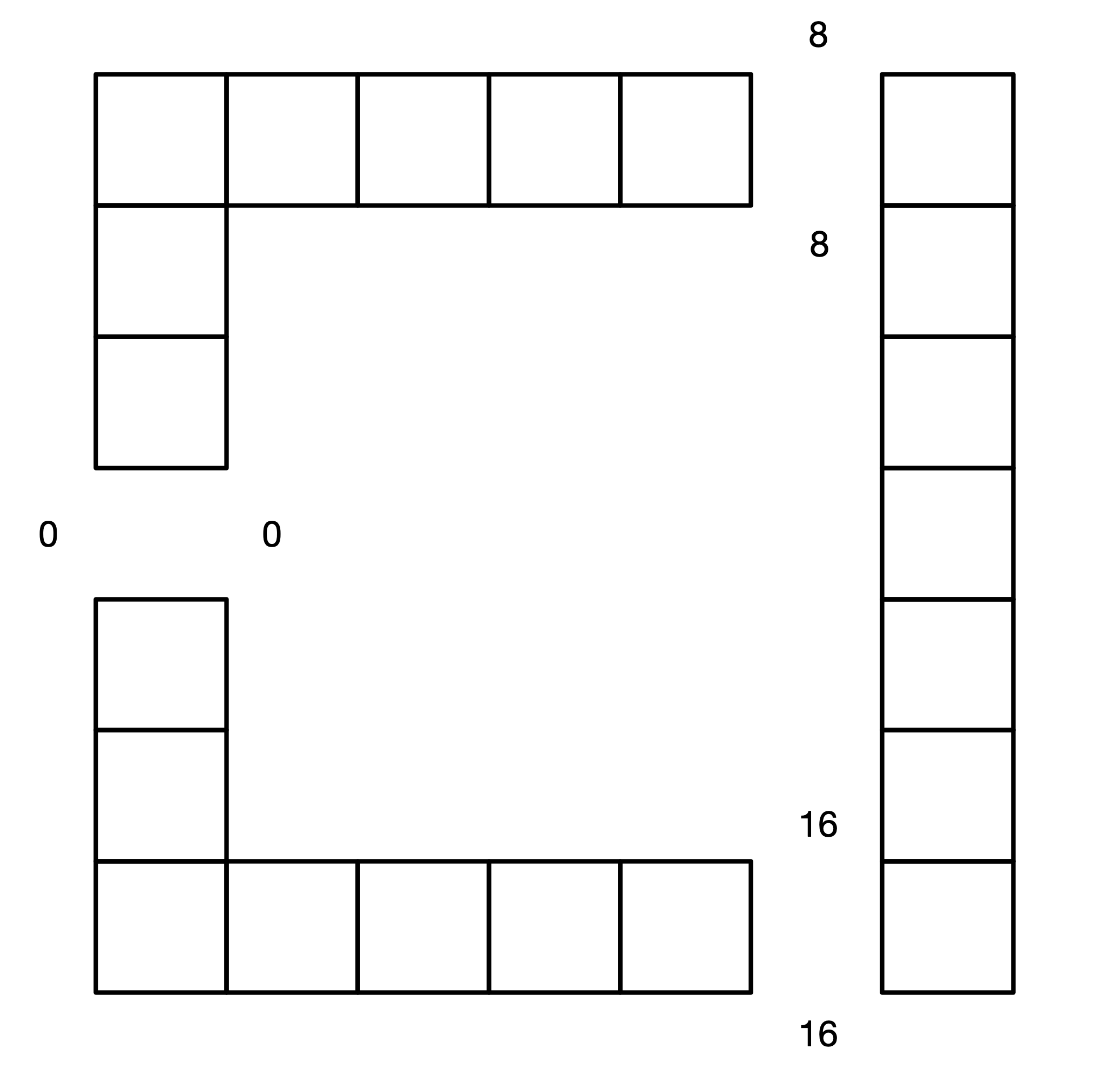}\caption{$A$}\label{fig:grid_a1}
\end{subfigure}
\hspace{0.01\linewidth}
\begin{subfigure}[b]{0.315\linewidth}
\centering\includegraphics[width=\linewidth]{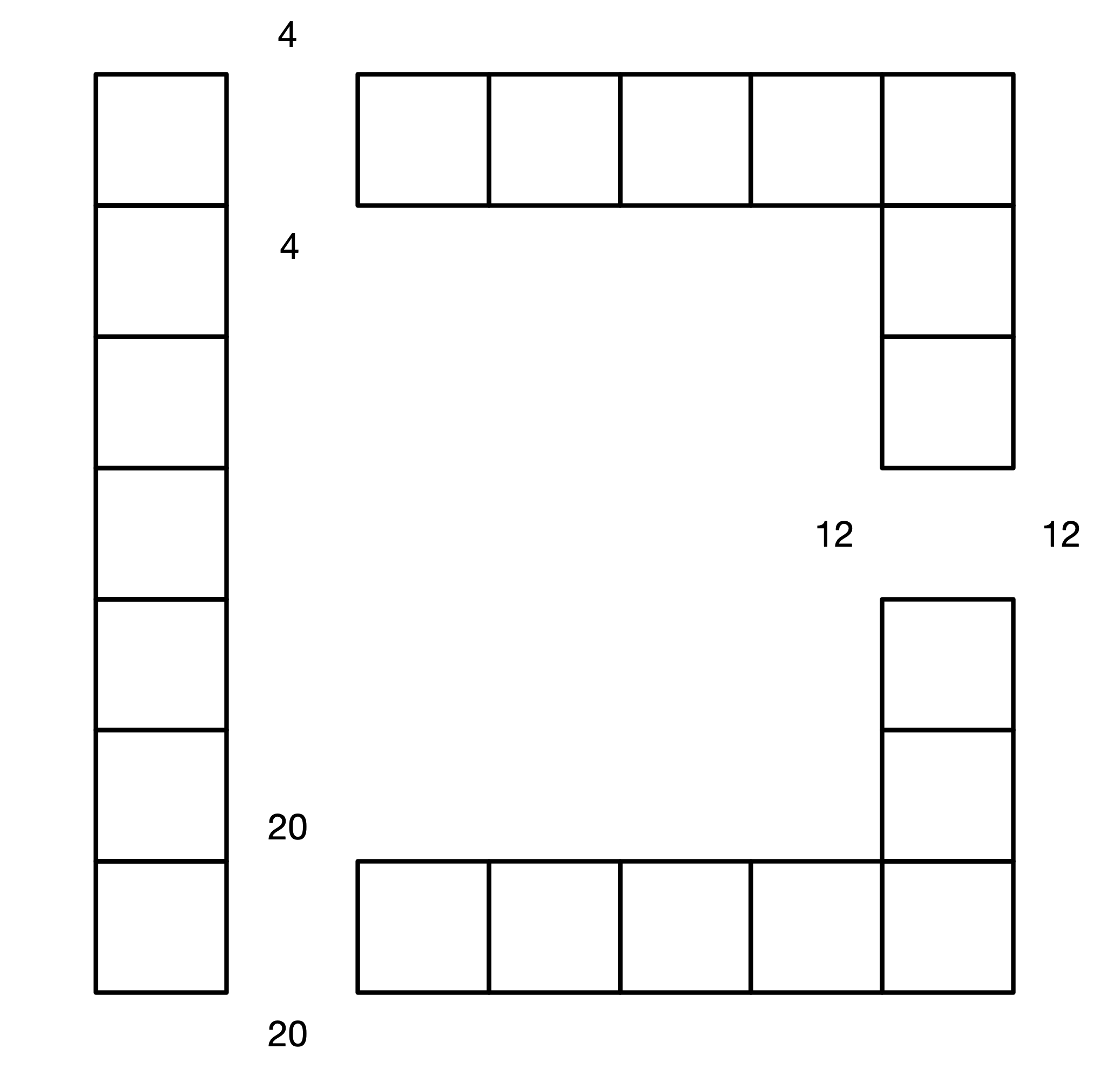}\caption{$A'$}\label{fig:grid_a2}
\end{subfigure}
\caption{The $8 \times 8$ grid-with-a-hole graph $G^\square_{8, 8}$ and two 3-partitions of it.}
\end{figure}

Pick an arbitrary 3-partition $A$ of the maximum weight (see Figure~\ref{fig:grid_a1}).
Let $x_A$ be the average gap position of $A$.
Let $\Omega_{\text{MID}} = \mathcal{C}$ be the 3-partitions in which at least one component lies completely in the inner or outer rectangle.
Let $\Omega_{\text{LEFT}}$ be the connected component containing $A$ in $\Omega \setminus \Omega_{\text{MID}}$ via ReCom transitions,
and let $\Omega_{\text{RIGHT}}$ be the rest of the state space.

First, we show that $\Omega_{\text{RIGHT}}$ is not empty, and in fact $\mu(\Omega_{\text{RIGHT}}) \ge \mu(\Omega_{\text{LEFT}})$.
The idea is to show that by starting from $A$, without entering $\mathcal{C}$, although ReCom can go to some state with average gap position $x$ not necessarily the same as in $A$, $x - x_A$ must be $O(1)$. 
The 6 gaps in $A$ can be partitioned into 3 pairs, with each pair containing one inner gap and one outer gap of the same label.
Without entering $\mathcal{C}$, at each step only one pair will be touched, and the two gaps in the pair move in the opposite directions.
The average gap position could be changed only if at least one of the gap goes through the corner edges, creating a net difference of absolute value at most $2 / 6 = \frac{1}{3}$.
However, due to geometric obstructions, neither of the inner and outer gaps could travel the entire inner/outer rectangle once. Since the number of corners they go through are bounded by a constant, the shift in average gap position is also upper bounded by a constant.

On the other hand, there exists a state $A' \in \Omega_{\text{RIGHT}}$ symmetric to $A$ (see Figure~\ref{fig:grid_a2}), but with $x_{A'} - x_A = \Omega(n)$ where $n$ is the graph size.
Consequently, this means the connected component containing $A'$ in $\Omega \setminus \Omega_{\text{MID}}$ must be in $\Omega_{\text{RIGHT}}$, whose weight is at least as large as $\mu(\Omega_{\text{LEFT}})$.

Then, following a similar argument as in the proof of Theorem \ref{thm:double-cycle}, one can show the conductance of ReCom $\Phi \le \frac{\mu(\mathcal{C})}{\mu(A)}$ is inversely exponentially small in the graph size.
\end{proof}

For completeness, we include a proof of ReCom being irreducible on double-cycle graphs. The fact that its being irreducible on grid-with-a-hole graphs can be proved similarly.
\begin{theorem}
ReCom for 3-partitions is irreducible on double-cycle graphs.
\end{theorem}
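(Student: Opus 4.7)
The plan is to show that every 3-partition of a double-cycle graph $G$ can reach the canonical state $A_0$ via a sequence of ReCom moves. Because the ReCom transition relation is symmetric at the partition level---if a merge-split takes $X$ to $Y$, then after an appropriate choice of spanning tree on the merged region the reverse merge-split takes $Y$ back to $X$---one-way reachability to a single canonical state implies strong connectivity of the transition graph and hence irreducibility. My first step would be to classify the connected induced subgraphs of $G$ on $2n$ vertices: each is either a line segment (a path of $2n$ consecutive vertices within $L$ or within $R$), a width-$n$ rectangle (a ladder segment spanning $n$ consecutive positions on both $L$ and $R$), or a mild hybrid consisting of a ladder segment with a short one-sided appendage. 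A counting argument shows that a balanced 3-partition contains at most one line-segment part.

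The second step is to transform an arbitrary 3-partition into some rectangular partition $A_j$. I would proceed by induction on a complexity measure (say, the total number of rung edges in $E_=$ that lie between different parts of the partition), and exhibit, for any non-rectangular partition, a ReCom merge-split that strictly decreases it. The key local move merges two neighboring parts and re-splits along an interior rung; by the classification, this is always feasible unless the merged $4n$-vertex region is itself a pair of line segments, a degenerate configuration that the classification and neighborliness constraints rule out for two adjacent parts of a 3-partition.

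The third step is to show the $A_j$ are mutually reachable. As already observed in the proof of \cref{thm:double-cycle}, merge-splits that avoid $\mathcal{C}$ preserve the average gap position, so a cyclic rotation of the gap positions must traverse $\mathcal{C}$. Concretely, I would merge two adjacent width-$n$ rectangles of $A_j$ into a $2 \times 2n$ ladder and re-split it horizontally into two line-segment parts (a state in $\mathcal{C}$, realisable by first choosing a spanning tree of the ladder that consists of two parallel paths joined by a single rung); then merge one of these line segments with the third rectangle and re-split into two new rectangles whose gap positions are shifted by one. Iterating this rotation reaches $A_0$ from any $A_j$.

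The main obstacle will be the case analysis in the second step: the hybrid connected shapes can be mildly intricate, and verifying that the chosen complexity measure strictly decreases under some ReCom move in each case requires a careful enumeration of boundary configurations between adjacent parts. The first and third steps are combinatorially clean once the classification of connected $2n$-vertex subgraphs is in hand.
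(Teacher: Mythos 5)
Your high-level plan (route every partition to a canonical rectangular state, and rotate among the $A_j$ by passing through line-segment states) is close in spirit to the paper's proof, which routes every state into the family $\mathcal{L}$ of partitions with two aligned line segments and shows $\mathcal{L}$ is inter-connected. But two of your concrete steps have genuine gaps. First, the classification in your first step is false. A part of a balanced $3$-partition of the double-cycle need not be a line segment, a width-$n$ rectangle, or a rectangle with a \emph{short} one-sided appendage: staircase-shaped parts with wildly unequal $L$- and $R$-arcs occur, and appendages can have length $\Theta(n)$ and appear at both ends. For instance, with $|L|=|R|=3n$, take $X_1 = \{l_0,\dots,l_{2n-3}\}\cup\{r_0,r_1\}$, $X_2=\{l_{2n-2},l_{2n-1}\}\cup\{r_2,\dots,r_{2n-1}\}$, $X_3=\{l_{2n},\dots,l_{3n-1}\}\cup\{r_{2n},\dots,r_{3n-1}\}$: all three parts are connected and have $2n$ vertices, yet $X_1$ and $X_2$ are ladders of length two with appendages of length $2n-4$. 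Likewise, your claim that at most one part can be a line segment is false: the states in $\mathcal{L}$ (two full-length aligned segments plus a rectangle) are exactly such partitions, and they are the hub of the paper's argument. Since your inductive step's case analysis is built on this classification, the induction as planned does not cover the actual state space; the paper avoids this entirely by arguing with the six "gaps": from any state outside $\mathcal{C}$ the three gap pairs can be equalized in at most three merge-splits, reaching a full-rung (three-rectangle) state.

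Second, your rotation step mis-models the chain. ReCom does not let you "choose a spanning tree of the ladder": the merged tree is $T_i\cup T_j\cup\{e\}$, and a fresh tree is drawn only when that tree admits no balanced cut other than $e$. Splitting the merged $2\times 2n$ ladder into its two full-length boundary paths requires the merged tree to contain the full top path, the full bottom path and one rung, hence \emph{both} horizontal edges joining the two former rectangles; but the ReCom tree contains only the single added crossing edge $e$, so this split is unavailable unless the resampling loop triggers, which you cannot force from a three-rectangle state (e.g.\ if $T_1\cup T_2\cup\{e\}$ is a Hamiltonian path it already has a balanced cut besides $e$). The paper exploits the opposite direction: when the two line segments of a state in $\mathcal{L}$ are merged, the combined tree is H-shaped and its only balanced cut is the added rung, so the resample \emph{is} forced, and that is precisely where arbitrary re-splits (and hence the shift of gap positions) become available; rectangles are then reached from $\mathcal{L}$, not the other way around. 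Finally, note that your appeal to symmetry of the transition relation is only valid for non-resampling steps; a step that involves a resample is not one-step reversible (the fresh tree need not contain spanning trees of the old parts), so deducing irreducibility from one-way reachability to $A_0$ needs more care than stated.
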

\begin{proof}
To prove that the transition graph of ReCom is connected, we show that all the states can move to a subset of states via ReCom transistions, and the subset is inter-connected.

Let $\mathcal{L}$ be the set of 3-partitions with two ``line segments'', one in $E_L$ and the other in $E_R$, whose endpoints align perfectly with each other's. For example, see Figure~\ref{fig:cycle_1} and Figure~\ref{fig:cycle_3}. $\mathcal{L}$ is inter-connected via ReCom moves illustrated in Figure~\ref{fig:cycle}.
%It remains to be shown that all states in $\Omega \setminus \mathcal{L}$ can move to states in $\mathcal{L}$.

\begin{figure}[h!]
\centering
\begin{subfigure}[b]{0.31\linewidth}
\centering\includegraphics[width=0.8\linewidth]{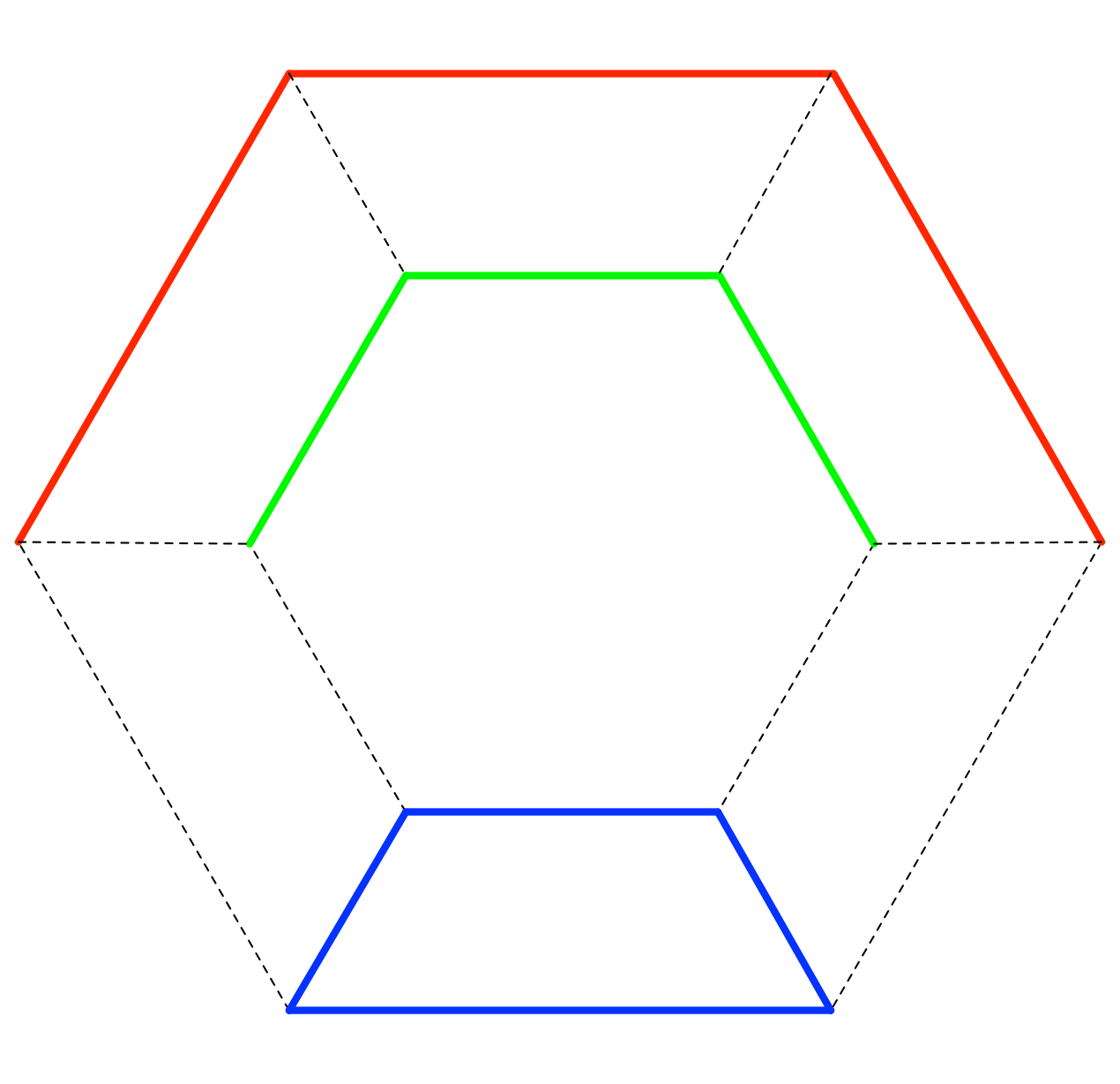}\caption{}\label{fig:cycle_1}
\end{subfigure}
\hspace{0.01\linewidth}
\begin{subfigure}[b]{0.31\linewidth}
\centering\includegraphics[width=0.8\linewidth]{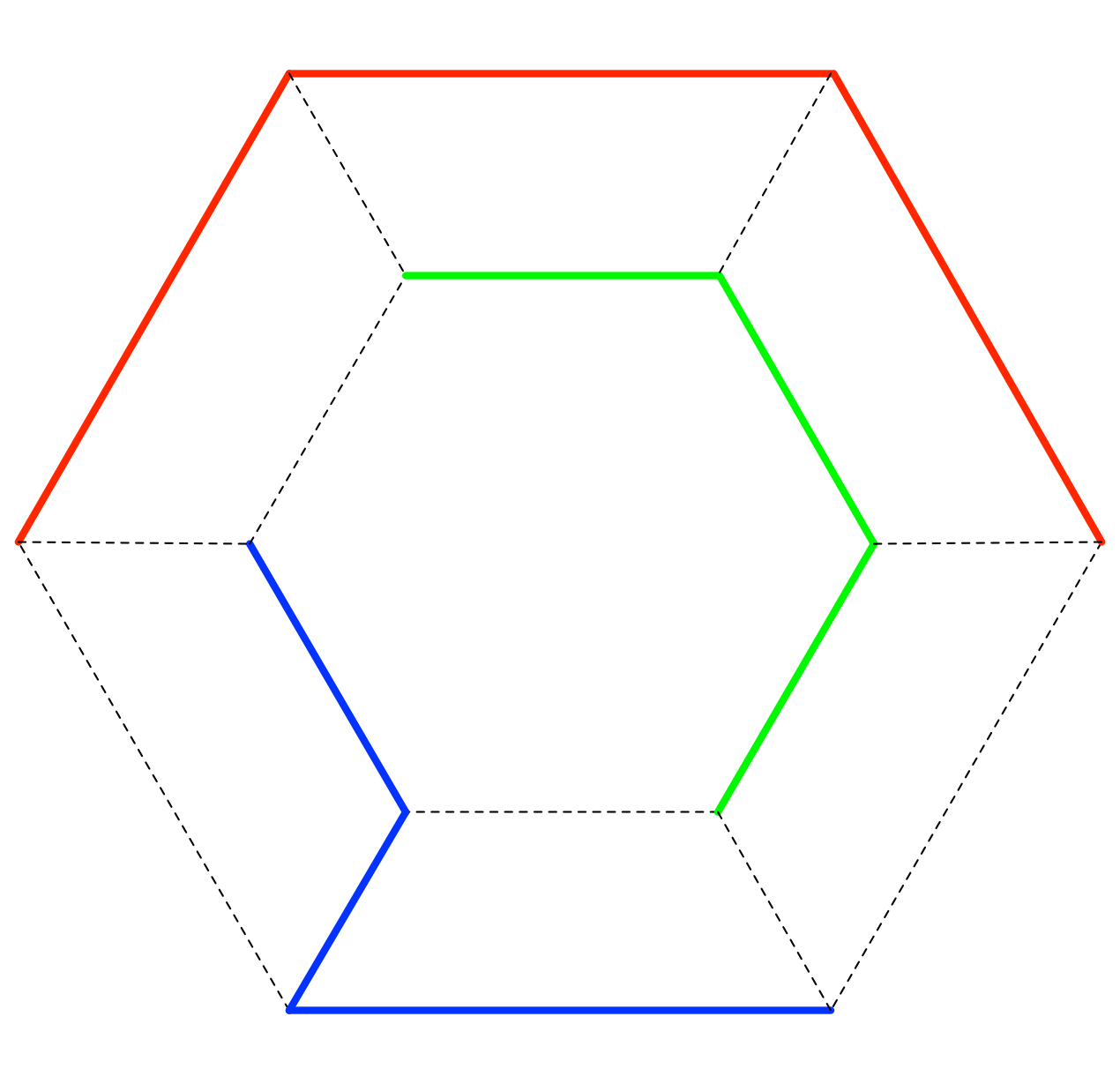}\caption{}\label{fig:cycle_2}
\end{subfigure}
\hspace{0.01\linewidth}
\begin{subfigure}[b]{0.31\linewidth}
\centering\includegraphics[width=0.8\linewidth]{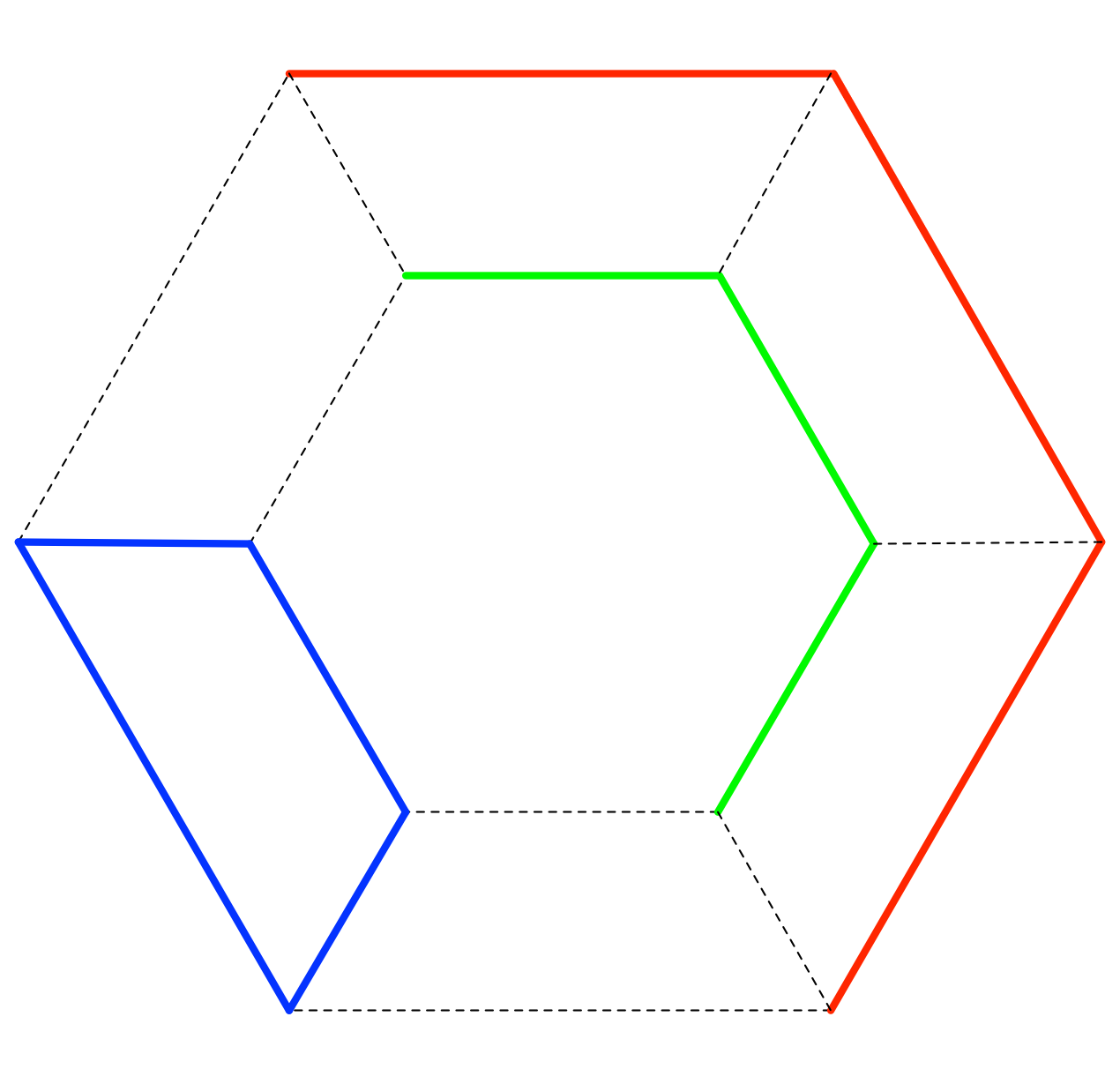}\caption{}\label{fig:cycle_3}
\end{subfigure}
\caption{ReCom transitions (from left to right) which inter-connect $\mathcal{L}$.}\label{fig:cycle}
\end{figure}

For any state $X = (X_1, X_2, X_3) \in \mathcal{C}$, at least one of the $X_i$'s is a line segment. W.l.o.g. assume it is $X_1$. In one step of ReCom, we can merge-split $X_2$ and $X_3$ to arrive at $X' = (X_1, X'_2, X'_3)$ so that $X' \in \mathcal{L}$.
For any state $X = (X_1, X_2, X_3) \not\in \mathcal{C}$, we know that there are 6 gaps which come in pairs. For each pair of gaps, we can equate them using at most one ReCom transition. Thus in three steps, we will reach a state $X' = (X'_1, X'_2, X'_3)$ with $\phi(X') = 3n$. By merge-spliting two components (say $X'_2$ and $X'_3$) in another step of ReCom, we can move into a state in $\mathcal{L}$.
\end{proof}

%================= ReCom torpid mixing (ends) ==========

\section{Fast sampling algorithm from the spanning tree distribution $\mu^*_{G,k}$} \label{sec:relax ReCom}
In this section, we show a variant of ReCom that, in quasi-linear time, samples from the spanning tree distribution without the balanced constraint. Each step of this chain merges two trees and then splits any tree in the ensemble of $(k-1)$ trees (see \cref{def:variant ReCom}).
%Clearly, when not enforcing the balanced constraint, then it is not necessary to resample a spanning tree on the graph $G[P_i \cup P_j],$ and each step of the chain simplifies to:
%More precisely, each step of this chain operates as follows.
\begin{definition} \label{def:variant ReCom}
Suppose the current state is a tuple of $k$ trees $T_1, \dots, T_k.$ Each step of the variant ReCom operates by: 
\begin{enumerate}
    \item Add an edge $e$ that connects two different parts $P_i$ and $P_j$ of the partition. This edge, together with the spanning trees $T_i$ and $T_j$ on $G[P_i]$ and $G[P_j]$ respectively, form a spanning tree $T$ for $G[P_i \cup P_j].$
    %\item Re-sample a new spanning tree on $G[P_i \cup P_j]$ and  update $T$ to be this new spanning tree.
    \item Choose a tree $T'$ among the $(k-1)$ trees $\set{T_r \given r\neq i, j} \cup \set{T}.$ Remove an edge $f$ from $T'.$ %T: = T_i \cup T_j \cup \set*{e}.$ 
    This creates two new trees $T'_i$ and $T'_j.$
    Let the new partition be the set of the vertices of the $k$ trees $ \set{T_r \given r\neq i, j} \cup \set{T, T'_i,T'_j} \setminus \set{T'}. $%Let $P'_i$ and $P'_j$ be the set of vertices of $T'_i$ and $T'_j.$ 
\end{enumerate}    
\end{definition}

With suitable probability of choosing tree $T'$ and edge $ f$, this variant of ReCom is simply the up-down walk  \cite[see][for details]{ALOVV21} on the distribution of $(n-k)$-edges forest. This walk mixes in $O(m\log m)$-step where $m$ is the number of edges in the graph $G$, and each step can be implemented in $O(\log m)$ time \cite[Theorem 3]{ALOVV21}.

In general, $\mu^c_{G,k}$ is a projection of the distribution $\chi^c_{G,k}$ over forests of $(n-k)$-edges of $G$ defined by
\[\chi^c_{G,k}(F) = \prod_{i=1}^k |P_i|^k\]
where $P_1, \cdots, P_k$ are the connected component of $F.$ Clearly,
\[\mu^c_{G,k} (P_1, \cdots, P_k) = \sum_{F} \chi^c_{G,k}(F)\]
where the sum is over all forests $F$ whose connected components are $P_1, \cdots, P_k.$ We can sample from $\chi^c$ using the up-down walk.
\begin{theorem}
Consider connected graph $G= G(V,E)$ with $ \abs{V} = n$ and $\abs{E} = m.$
The up-down walk for $\mu^c$ mixes in $O(m \log m)$ steps for $ c\in [0,1]$ and in $O(m (n/k)^{3k(c-1)} (n+c)\log m)$-steps for $c>1.$

Each step can be implemented in amortized $O(\log m)$ time for $ c = 0$ and $ O(n) $ time for $c > 0.$ 
\end{theorem}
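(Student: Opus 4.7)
The overall plan is to analyze the up-down walk on $\chi^c$ directly: since $\mu^c$ is the projection of $\chi^c$ onto partitions, the mixing time of the walk on $\chi^c$ upper-bounds that for $\mu^c$ (total variation only contracts under projection). Each step of the up-down walk picks an edge of the current $(n-k)$-edge forest uniformly to delete and then re-adds an edge according to the conditional $\chi^c$; this is the down-up walk on rank-$(n-k)$ independent sets of the graphic matroid, reweighted by $\prod_i \abs{P_i}^c$.

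For $c = 0$, $\chi^0$ is uniform on the $(n-k)$-truncation of the graphic matroid. By the matroid log-concavity theorem invoked through \cite[Theorem 3]{ALOVV21}, the walk mixes in $O(m \log m)$ steps, and standard dynamic-forest data structures implement each step in amortized $O(\log m)$. For $c \in (0, 1]$ I would show that $\chi^c$ remains $\Theta(1)$-fractionally log-concave so that the $O(m \log m)$ bound is retained. A convenient handle at $c = 1$ is to augment $G$ with a universal vertex $v_0$: spanning trees of $G \cup \{v_0\}$ with exactly $k$ edges incident to $v_0$ biject with rooted $k$-component forests of $G$, and $\prod_i \abs{P_i}$ counts the rootings; this realizes $\chi^1$ as a matroid base distribution and inherits $O(m \log m)$ mixing. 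For fractional $c \in (0, 1)$ I would interpolate between these two endpoints via a Schur-concavity or integral-representation argument for $x \mapsto x^c$.

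For $c > 1$, I plan a Dirichlet-form comparison (Diaconis--Saloff-Coste) between the $\chi^c$- and $\chi^1$-walks. Since $\chi^c(F)/\chi^1(F) = \prod_i \abs{P_i}^{c-1}$, and the components of a $k$-forest on $n$ vertices satisfy $\abs{P_i} \geq 1$ with $\sum_i \abs{P_i} = n$, AM-GM gives $1 \leq \prod_i \abs{P_i}^{c-1} \leq (n/k)^{k(c-1)}$. The ratio of the supremum to the infimum of $\chi^c/\chi^1$ is then $(n/k)^{k(c-1)}$, so the standard Dirichlet comparison loses a factor $(n/k)^{2k(c-1)}$ in the Poincar\'e constant; the extra $(n+c)$ factor absorbs the $\log(1/\mu_{\min})$ term when converting the spectral gap into a mixing-time bound via \cref{thm:mixing}.

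For the per-step cost, $c = 0$ has trivial weights and link-cut trees support each swap in amortized $O(\log m)$. For $c > 0$, the conditional distribution over candidate replacement edges depends on the sizes of the two components of $F \setminus \{e\}$, which must be evaluated for every candidate edge; maintaining and scanning component sizes under arbitrary edge swaps takes $O(n)$ per step. The main obstacle I expect is establishing $\Theta(1)$-fractional log-concavity of $\chi^c$ uniformly across $c \in [0, 1]$: the matroid argument handles the two endpoints cleanly, but the interpolation for non-integer $c$ requires a careful polynomial-positivity argument for the generating polynomial of $\chi^c$ on the positive orthant.
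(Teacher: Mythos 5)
Your overall architecture matches the paper's: the same apex-vertex construction at $c=1$, a Diaconis--Saloff-Coste comparison against the $c=1$ walk for $c>1$ with exactly the $(n/k)^{2k(c-1)}$ loss in the Poincar\'e constant and the $\log(1/\pi_{\min})$ term absorbed into the $(n+c)$ factor, and the same per-step costs ($O(\log m)$ amortized via link-cut trees at $c=0$, $O(n)$ component-size scans for $c>0$). However, there is a genuine gap exactly where you flag it: the case $c\in(0,1)$ is only gestured at (``Schur-concavity or integral-representation argument''), while the theorem claims the $O(m\log m)$ bound for all of $[0,1]$. The paper closes this by a different use of the apex construction: \cref{lem:chi1} shows that $\chi^1_{G,k}$ is \emph{strongly Rayleigh} -- the spanning-tree generating polynomial of the augmented graph is real stable, and setting all apex variables to a single variable $z$ and extracting the degree-$k$ part in $z$ preserves this by \cite[Theorem 3.4]{Choe_2004} -- and then invokes \cite[Theorem 1.7]{logconcave2} to conclude that the complement distribution $\bar{\chi}^c$ is log-concave for \emph{every} $c\in[0,1]$ (note $\chi^c=(\chi^1)^c$ pointwise since $\chi^0\equiv 1$), which yields mLSI and Poincar\'e constants $\Omega(1/m)$ for the down-up walk and hence $O(m\log m)$ mixing via \cref{thm:mixing}. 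Without a result of this type bridging the endpoints, the fractional case does not follow from $c=0$ and $c=1$ alone.

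A second, related soft spot is your claim that the apex construction ``realizes $\chi^1$ as a matroid base distribution.'' Conditioning the uniform spanning-tree distribution of $G\cup\{v_0\}$ on having exactly $k$ edges incident to $v_0$ is not a matroid base distribution (the degree constraint is not a matroid restriction), so you cannot directly quote matroid mixing results even at $c=1$; the correct route is the strong Rayleigh / log-concave generating polynomial argument above. Finally, note that the object to which log-concavity must be applied is the complement distribution $\bar{\chi}^c$, since the up-down walk on $(n-k)$-edge forests is the down-up walk on $\bar{\chi}^c$; your proposal never makes this reduction explicit.
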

\begin{proof}
Let $\bar{\chi}^c$ be the complement of $\chi^c$ i.e.
$\bar{\chi}^c (S) = \chi(V \setminus S).$ The up-down walk on $\chi^c$ is precisely the down-up walk \cite[see][]{ALOVV21} on $\bar{\chi}^c.$ For $c\in[0,1],$ the distribution $\bar{\chi}^c$ is log-concave by \cref{lem:chi1} and \cite[Theorem 1.7]{logconcave2}, thus the down-up walk on $\bar{\chi}^c$ has  mLSI constant and Poincare constant $\geq \Omega(m^{-1}).$ %where $m$ is the number of edges in the graph. 
For $c> 1,$ we can compare the Poincare constant of the down-up walk on $\bar{\chi}^c$ with the down-up walk on $\bar{\chi}^1.$ Let  $\mathcal{P}^{(c)}, \pi^c$ be the transition matrix and stationary distribution for the down-up walk on $\bar{\chi}^c.$ Note that the union of edges in a tuple of $ k$ disjoint trees forms a forest containing $n-k$ edges.
Each step operates on a forest $F$ by adding a uniformly random edge $e\not\in F$, then removing an edge $f$ from $F \cup \set{e}$ to create a new forest $F' = T'_1 \cup \cdots T'_k$  with probability proportional to  $\prod_{i=1}^k \abs{T_i}^c.$ 

 Note that%We compare the transition probability from $P$ to $P'$ between the two chains.
\[Z_{\chi^c_k} = \sum_{(T_1, \cdots, T_k)} \prod_{r=1}^k \abs{T_i}^c \leq (n/k)^{(c-1)k} \sum_{(T_1, \cdots, T_k)} \prod_{r=1}^k \abs{T_i} = (n/k)^{(c-1)k} Z_{\chi^1_k}\]
Consider two tuples of spanning tree $ \mathcal{T}= (T_1, \cdots, T_k)$ and $\mathcal{T}'= (T'_1, \cdots, T'_k).$ Let $ F= \bigcup_i T_i$ and $F'=\bigcup_i T'_i.$ For $  \mathcal{P}^{(c)} (\mathcal{T}, \mathcal{T}') > 0,$ we need $F'= F\cup \set{e} \setminus \set{f}$ for some edges $e,f.$ Let $\mathcal{R}$ be the set of $(n-k)$-edges forests that can be obtained from removing an edge from $F\cup F' = F\cup \set{e}.$   %with $T'_r = T_r \forall r \not\in \set*{i, j}.$
%\[\pi^c(T) \mathcal{P}^{(c)} (\mathcal{T}, \mathcal{T}') = \frac{(\abs{T'_i} \abs{T'_j}\prod_{r} \abs{T_i})^c } {Z_{\chi^c_k} }  \geq \frac{\abs{T'_i} \abs{T'_j}\prod_{r} \abs{T_i} }{(n/k)^{(c-1)k}Z_{\chi^1_k}} =  (n/k)^{-(c-1)k} \pi^1(T) \mathcal{P}^{(1)} (\mathcal{T}, \mathcal{T}')\]
\begin{align*}
    \pi^c(T) \mathcal{P}^{(c)} (\mathcal{T}, \mathcal{T}')  &= \frac{\chi_{G,k}^c (F) }{Z_{\chi^c_k} } \cdot \frac{\chi_{G,k}^c (F')  }{(m-n+k)\sum_{W} \chi_{G,k}^c (W)} \\
    &\geq \frac{\chi_{G,k}^1 (F) \chi_{G,k}^1(F')}{(n/k)^{2(c-1)k} (m-n+k) Z_{\chi^1_k} \sum_{W} \chi_{G,k}^c (W)}\\
    &= \frac{1}{(n/k)^{2(c-1)k} }  \pi^1(T) \mathcal{P}^{(1)} (\mathcal{T}, \mathcal{T}') 
\end{align*}
where the inequality follows from $ (n/k)^{(c-1)k} \chi_{G,k}^1 (W)\geq \chi_{G,k}^c (W) \geq \chi_{G,k}^1 (W) . $
Next
\[\pi^c(T) = \frac{(\prod_{r} \abs{T_i})^c } {Z_{\chi^c_k} } \leq \frac{(n/k)^{(c-1)k}(\prod_{r} \abs{T_i}) } {Z_{\chi^1_k} }  =(n/k)^{(c-1)k}  \pi^1 (T)\]
Thus, for any function $f$ over the state space of these two chains
\[\mathcal{E}_{\mathcal{P}^{(c)}} (f, f) \geq (n/k)^{2(c-1)k}\mathcal{E}_{\mathcal{P}^{(1)}} (f,  f)   \]
thus by \cite[Equation (2.3)]{DS94},
\[\alpha_0(\mathcal{P}^{(c)}) \geq (n/k)^{3k(c-1)}\alpha_0(\mathcal{P}^{(1)}). \]
Moreover
\[\min_{\mathcal{T}} \pi^c(\mathcal{T}) \geq \frac{1}{Z_{\chi^c_k}} \geq ((n/k)^{ck} m^{n-k})^{-1} \]
Applying \cref{thm:mixing} gives the desired bound on the mixing time of $\mathcal{P}^{(c)}.$

For $ c= 0,$ each step of the up-down walk can be implemented in amortized $O(\log m)$ time using link-cut tree \cite[see][]{ALOVV21}. For other value of $c,$ to implement each step of the Markov chain, we can iterate over all edges contained in the tree $T_i \cup T_j \cup \set*{e}$ and computing the size of the two new trees created by removing that edge in $O(n)$ time by running DFS. 
% and
% \[\text{Ent}_{\pi^c} (f) \leq (n/k)^{(c-1)k} \text{Ent}_{\pi^1} (f)\]
% thus 
% \[\rho_0(\mathcal{P}^{(c)})= \inf_f \frac{\mathcal{E}_{\mathcal{P}^{(c)} (f, \log f) }}{ \text{Ent}_{\pi^c} (f) } \geq   (n/k)^{-2(c-1)k}\inf_f \frac{\mathcal{E}_{\mathcal{P}^{(1)} (f, \log f) }}{ \text{Ent}_{\pi^1} (f) } = (n/k)^{-2(c-1)k} \rho_0  (\mathcal{P}^{(1)}). \]

%of transition matrix $\mathcal{P}^{(c)}$ of 
\end{proof}
\begin{lemma} \label{lem:chi1}
The distribution $\chi^1_{G,k}$ is strongly Rayleigh.%, thus its complement is also strongly Rayleigh.
\end{lemma}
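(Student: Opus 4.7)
The plan is to realize $\chi^1_{G,k}$ as obtained from the uniform spanning tree measure on an augmented graph via operations that are known to preserve the strongly Rayleigh (SR) property, namely projection onto a subset of coordinates and conditioning on a fixed cardinality. The starting point is that the uniform distribution over spanning trees of any connected graph is a determinantal measure (with kernel given by the transfer-current matrix of the graph Laplacian) and hence strongly Rayleigh.

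Let $G' = (V \cup \{v_0\}, E \cup \{\{v_0, v\} : v \in V\})$ be the graph formed by adjoining an apex vertex $v_0$ joined to every vertex of $G$, and let $\nu$ be the uniform distribution on spanning trees of $G'$, viewed as a measure on $\{0,1\}^{E(G')}$. Since $\nu$ is strongly Rayleigh, its projection $\nu'$ onto the coordinates $E \subseteq E(G')$ (the ``non-apex'' edges) is also strongly Rayleigh; this is the standard closure of SR under projection/marginalization \cite[see][]{logconcave2}. I would then identify $\nu'$ combinatorially: for $F \subseteq E$, the $\nu'$-weight of $F$ is proportional to the number of spanning trees $T$ of $G'$ with $T \cap E = F$. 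Such a $T$ exists only when $F$ is a forest of $G$, and if $F$ has components $P_1, \ldots, P_{k'}$ then $T$ must additionally contain exactly one $v_0$-edge into each $P_i$, giving $\prod_{i=1}^{k'} |P_i|$ completions. Hence $\nu'(F) \propto \prod_i |P_i|$ for every forest $F$ of $G$ (and vanishes on non-forests).

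To isolate forests of cardinality exactly $n - k$, I would condition $\nu'$ on the event $\sum_{e \in E} X_e = n - k$. The Borcea--Brändén--Liggett closure theorems guarantee that conditioning a strongly Rayleigh distribution on a fixed value of $\sum_e X_e$ preserves strong Rayleighness. The resulting conditional measure is supported on $(n-k)$-edge forests, and its density is proportional to $\prod_i |P_i|$, which is precisely $\chi^1_{G,k}$ by definition.

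The main obstacle I foresee is not technical but bookkeeping: verifying carefully that the projected measure $\nu'$ equals the claimed weighted forest measure (ensuring the normalizing constants cancel cleanly and that non-forests indeed have zero projected weight), and citing the correct versions of the SR closure results for projection and for conditioning on a cardinality. Once those are in place, the chain of implications is immediate and no further computation is required.
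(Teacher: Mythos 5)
Your proposal is correct and mirrors the paper's argument: both use the apex-vertex augmentation $G'$, the uniform spanning tree measure on $G'$, and the count of $\prod_i \abs{P_i}$ completions to realize $\chi^1_{G,k}$, differing only in the final extraction step. Where the paper keeps a common variable $z$ on the apex edges and takes the degree-$k$ part of the (homogeneous) generating polynomial via \cite[Theorem 3.4]{Choe_2004}, you instead project onto $E(G)$ and condition on the cardinality $n-k$, invoking the Borcea--Br\"and\'en--Liggett closure of strongly Rayleigh measures under projection and truncation to a fixed level --- an equivalent and equally valid route.
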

\begin{proof}
Consider the graph $G'$ obtained from $G$ by adding a node $s$ and edges $(s,v)$ for all $v\in G.$ Consider the uniform distribution over spanning trees of $G'$ that contains exactly $k$ edges incidence to $s.$ This distribution induce the distribution $\chi^1_{G,k}$ over the $(n-k)$-edges forest of $G$, via the map that takes spanning tree $T'$ of $G'$ and maps it to $T' \cap E(G),$ which forms a forest of $G.$ Indeed, for each forest $F$ of $G$ with $(n-k)$ edges with connected components $P_1,\cdots, P_k,$ there is exactly $\abs{P_1} \times \cdots \abs{P_k}$ ways to adding $k$ edges from $s,$ with $1$ edge from each component $P_i$, to form a spanning tree of $G'.$ 

%Next, we note that the distribution $\nu$ is strongly Rayleigh. %thus the up-down walk on $\chi'$ mixes in $\tilde{O}(m)$ steps. 
The uniform distribution $\nu$ over spanning tree of $G'$ is strongly Rayleigh. The generating polynomial $f_{\nu}$ of $\nu$ is
\[f_{\nu} (\set*{z_e}_{e\in E(G)}, \set*{z_{(s,v)}}_{v\in G}) = \sum_{T \text{ spanning } G'} \prod_{e' \in T} z_{e'} \]
Setting $z_{(s,v)}$ to $z$, we still have a homogeneous real stable polynomial. Taking the part with degree $k$ in $z$ gives the generating polynomial for $\chi^1_{G,k},$ which is strongly Rayleigh by \cite[Theorem 3.4]{Choe_2004}.  %Taking derivative $\partial^k z f,$ then replace $z$ with $z^{-1}$ and multiply by $z^n$, and finally setting $z = 0$ still preserve real-stability. So we ends up with the real stable polynomial
%\[\sum_{T \text{ spanning } G': \abs{T \cap \set*{(s,v):v\in V}}= k} \prod_{e \in T \cap E(G)} z_{e} \]
%which is also the generating polynomial for $\chi'.$ We are done.
\end{proof}
\section{Fraction of balanced partitions}
\label{sec:FractionBalanced}
%\June{add experimental data/conjecture about percentage of balanced partition}
%\June{TODO: argue about percentage of balanced partition on double cycle/grid-with-a-hole}
In this section, we provide evidence for \cref{conj:percentage of balance partition}, and our goal is to lower bound $\frac{Z_{\mu^{\text{balanced}}_k} }{Z_{\mu^{*}_k} }.$

First, we bound the partition function of $\mu^*_k$ in terms of the number of spanning trees.
\begin{lemma}\label{lem:partition function bound}
%Consider a connected graph $G$ %and the spanning tree distribution $\mu^*_k$ on $G,$ then 
For a connected graph $G = G(V,E)$
\[Z_{\mu^*_k} \leq \binom{n-1}{k-1} T(G) \leq n^{k-1} T(G).\]
\end{lemma}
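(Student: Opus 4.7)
The plan is to prove the bound by exhibiting a surjection that realizes a double-counting argument. Let $\mathcal{A}$ denote the set of pairs $(T, S)$ where $T$ is a spanning tree of $G$ and $S \subseteq E(T)$ is a subset of exactly $k-1$ tree edges; clearly $|\mathcal{A}| = T(G)\binom{n-1}{k-1}$. Let $\mathcal{B}$ denote the set of pairs $(\mathcal{P}, (T_1,\ldots,T_k))$ consisting of a partition $\mathcal{P} = (P_1,\ldots,P_k)$ of $V$ together with, for each $i$, a spanning tree $T_i$ of the induced subgraph $G[P_i]$; by definition $|\mathcal{B}| = Z_{\mu^*_k}$, since a partition with some disconnected part contributes zero.

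First I would define the map $\Phi \colon \mathcal{A} \to \mathcal{B}$ as follows: given $(T,S)$, deleting the $k-1$ edges of $S$ from $T$ splits $T$ into a spanning forest with exactly $k$ components, whose vertex sets form a partition of $V$ and whose connected components serve as spanning trees of the corresponding induced subgraphs. The central step is to argue that $\Phi$ is surjective. Given any $(\mathcal{P}, (T_1,\ldots,T_k)) \in \mathcal{B}$, consider the contracted multigraph $G/\mathcal{P}$ obtained by contracting each part $P_i$ to a single vertex. Since $G$ is connected, so is $G/\mathcal{P}$, so it admits a spanning tree on $k-1$ edges; lifting those edges back to $G$ and adjoining them to $T_1 \cup \cdots \cup T_k$ yields a spanning tree $T$ of $G$ on $n-1$ edges, for which the lifted edges form a subset $S$ of size $k-1$ with $\Phi(T,S) = (\mathcal{P},(T_1,\ldots,T_k))$.

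Surjectivity of $\Phi$ immediately yields $Z_{\mu^*_k} = |\mathcal{B}| \leq |\mathcal{A}| = T(G)\binom{n-1}{k-1}$, after which $\binom{n-1}{k-1} \leq n^{k-1}$ is the standard binomial bound. There is no serious obstacle here; the only thing to be careful about is the convention for ordered versus unordered partitions — the surjection $\Phi$ naturally targets unordered partitions (the components of a forest are unordered), so one should verify that $Z_{\mu^*_k}$ is defined consistently with this convention, or otherwise adjust by the uniform $k!$ factor on both sides.
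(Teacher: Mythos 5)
Your proof is correct and is essentially the same double-counting argument as the paper's: the paper identifies $Z_{\mu^*_k}$ with the number of $(n-k)$-edge forests and maps each forest to a spanning tree containing it (noting each tree has $\binom{n-1}{k-1}$ possible preimages), which is just the mirror image of your surjection from pairs $(T,S)$ onto forests. Your remark about the ordered/unordered partition convention matches the convention the paper implicitly uses.
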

\begin{proof}
Recall that $Z_{\mu^*_k}$ is exactly the number of forests containing exactly $(n-k)$ edges in $G.$

%Since $G$ is connected, for each such forest $F,$ there exists at least one spanning tree $T$ of $G$ containing that forest. 
Pick an arbitrary map $\Phi$ that map $(n-k)$-edges forest $F$ to a spanning tree $T$ of $G$ that contains $F$ (such $T$ always exists because $G$ is connected). For each spanning tree $T$ of $G,$ there are $\binom{n-1}{k-1}$ ways to choose a preimage of the map $\Phi$ i.e. removing $k-1$ edges from $T$ to obtain a forest $(n-k)$-edges $F$ contained in $T.$ The desired inequality follows.
\end{proof}

For graph $G$ and $H,$ let its Cartesian product $G \times H$ be the graph with vertices $(u, v)\in V(G) \times V(H)$ and edges
\[E(G\times H) = \set*{((u,v), (u',v)) \given (u,u') \in E(G)}  \cup \set*{((u,v), (u,v')) \given (v,v') \in E(H)}.\]
For example, the rectangular grid graph $G_{m,n}$ is $P_m \times P_n$ with $P_n$ be the path graph with $n$ vertices.  %the double-cycle graph is $P_2 \times C_n$ where $P_2$ is the path graph with 2 vertices and $C_n$ is the cycle graph with $n$ vertices. %More generally, 
We also consider the family of cylindrical grid graph $C_{m,n} = P_m \times C_n,$ which includes the double-cycle graph $C_{2,n}.$ 
\begin{theorem} \label{thm:percentage of balance partition simple graph}
For the rectangular graph $G_{m, n}$ and the cylindrical grid graph $C_{m,n} =P_m \times C_n$ with $m = O(1)$, the fraction of balanced $k$-partitions $Z_{\mu^{\text{balanced}}_k} / Z_{\mu^{*}_k}$ is lower bounded by $\Omega\left(\frac{1}{n^{k-1} \Theta(1)^k }\right)$ and $\Omega\left(\frac{1}{n^{k} \Theta(1)^k}\right)$, respectively.
\end{theorem}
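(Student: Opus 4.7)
The plan is to upper bound the denominator $Z_{\mu^*_k}$ using \cref{lem:partition function bound} and to lower bound the numerator $Z_{\mu^{\text{balanced}}_k}$ by summing $\prod T(P_i)$ over an explicit family of balanced partitions. The comparison then reduces to standard transfer-matrix asymptotics for the spanning-tree count on grid graphs with one bounded side.

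First, \cref{lem:partition function bound} gives $Z_{\mu^*_k} \le N^{k-1}\,T(G)$, where $N$ is the total number of vertices. Since $m = O(1)$, we have $N = mn = O(n)$, so $Z_{\mu^*_k} = O(n^{k-1} T(G))$. For the numerator, I would use the \emph{slab partition} that cuts the graph perpendicular to its long axis into $k$ equal slabs of width $n/k$ (assuming $k \mid n$; the remainder case only contributes constant-factor adjustments that can be absorbed into the $\Theta(1)^k$ term). In both the rectangular and cylindrical cases, each slab induces a subgraph isomorphic to $G_{m,n/k}$, so every such partition contributes $T(G_{m,n/k})^k$ to $Z_{\mu^{\text{balanced}}_k}$.

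The final ingredient is the asymptotic formula $T(G_{m,n}) = \Theta(\lambda_m^n)$ and $T(C_{m,n}) = \Theta(n\,\lambda_m^n)$ for a constant $\lambda_m > 1$ depending only on $m$. The extra polynomial prefactor of $n$ in the cylindrical case arises from the trace structure of the $m$-column transfer matrix on the cycle direction and is already visible in the elementary identities $T(C_n) = n$ and $T(C_{2,n}) = \Theta\bigl(n\,(2+\sqrt{3})^n\bigr)$. In particular, $T(G_{m,n/k})^k = \Theta(\lambda_m^{n/k})^k = \lambda_m^n/\Theta(1)^k$. Substituting these into the ratio yields
\[\frac{Z_{\mu^{\text{balanced}}_k}(G_{m,n})}{Z_{\mu^*_k}(G_{m,n})} \;\ge\; \frac{T(G_{m,n/k})^k}{O(n^{k-1}) \cdot \Theta(\lambda_m^n)} \;=\; \Omega\!\left(\frac{1}{n^{k-1}\,\Theta(1)^k}\right),\]
\[\frac{Z_{\mu^{\text{balanced}}_k}(C_{m,n})}{Z_{\mu^*_k}(C_{m,n})} \;\ge\; \frac{T(G_{m,n/k})^k}{O(n^{k-1}) \cdot \Theta(n\,\lambda_m^n)} \;=\; \Omega\!\left(\frac{1}{n^{k}\,\Theta(1)^k}\right),\]
matching the two bounds in the statement.

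The main obstacle I expect is certifying the transfer-matrix asymptotic with the correct polynomial prefactor: constant in the rectangular case, linear in $n$ in the cylindrical case. I would handle this by diagonalizing the $m$-column transfer matrix $M$ (which has constant size since $m = O(1)$): for $G_{m,n}$ one contracts $M^n$ against fixed boundary vectors picking up only the dominant eigenvalue to give $\Theta(\lambda_m^n)$, while for $C_{m,n}$ one computes the trace of $M^n$ and combines with the Matrix-Tree theorem to pick up the polynomial prefactor. A secondary bookkeeping issue is that the slab construction uses only a single balanced partition (for the rectangle) or one representative from each rotation class (for the cylinder); since we only need a lower bound on $Z_{\mu^{\text{balanced}}_k}$, discarding the other balanced partitions is harmless and in fact gives the claimed bounds as stated.
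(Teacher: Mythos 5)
Your proposal is correct and follows essentially the same route as the paper: upper-bound $Z_{\mu^*_k}$ via \cref{lem:partition function bound} (with $mn = O(n)$ vertices), lower-bound $Z_{\mu^{\text{balanced}}_k}$ by the single slab partition into $k$ copies of $G_{m,n/k}$, and compare using $T(G_{m,n}) = \Theta(c^n)$ and $T(C_{m,n}) = n\,\Theta(c^n)$. The only cosmetic difference is that the paper simply cites \cite{Daoud14} for these spanning-tree asymptotics rather than rederiving them by transfer-matrix diagonalization as you suggest.
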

\begin{proof}
W.l.o.g., we can assume $k \mid n$.
\cite[Theorem 2 and 4]{Daoud14} shows that for $m = O(1),$ 
\[T(G_{m,n}) = \Theta (c^n) \]
and
\[T(C_{m,n}) =n \Theta(c^n)\]
for some constant $c > 1$ depending on $m.$
On the other hand, the $k$-partition $X$ of $G_{m,n}$ or $C_{m,n}$, with each component being $G_{m, n/k},$ has weight
\[\mu^{\text{balanced}}_k(X) = \prod_{i=1}^k\Theta(c^{n/k}) = \Theta(c^n) \Theta(1)^k,\]
and thus 
\[Z_{\mu^{\text{balanced}}_k}\geq \mu^{\text{balanced}}_k(X) = \Omega(c^n).\]
This, together with \cref{lem:partition function bound}, implies the desired inequality.
\end{proof}

We remark that similar bounds hold for torus graphs, which we omit as the focus of this paper is on planar graphs.

By slightly modifying the proof of Theorem~\ref{thm:percentage of balance partition simple graph}, we can show similar results for the grid-with-a-hole graphs.
\begin{theorem}
For the $m \times n$ grid-with-a-hole graph $G^\square_{m, n}$, the fraction of balanced $k$-partitions $Z_{\mu^{\text{balanced}}_k} / Z_{\mu^{*}_k}$ is lower bounded by $\Omega\left(\frac{1}{(m+n)^k \Theta(1)^k}\right)$.
\end{theorem}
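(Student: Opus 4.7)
My plan is to mirror the proof of \cref{thm:percentage of balance partition simple graph}. First, observe that $|V(G^\square_{m,n})| = 2m+2n-4 = \Theta(m+n)$, since removing the interior $(m-2)\times(n-2)$ block from $G_{m,n}$ leaves only a ``frame'' of bounded width along the boundary of the grid. Applying \cref{lem:partition function bound},
\[
Z_{\mu^*_k} \;\le\; |V|^{k-1}\cdot T(G^\square_{m,n}) \;=\; O\!\bigl((m+n)^{k-1}\bigr)\cdot T(G^\square_{m,n}).
\]

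Next, I would exhibit an explicit balanced partition $X = (X_1,\dots,X_k)$ obtained by cutting the frame into $k$ consecutive arcs, each on $|V|/k$ vertices (adjusting by $O(1)$ when $k\nmid|V|$). Because the frame has constant width and cyclic structure---analogous to $C_{m,n}$ in \cref{thm:percentage of balance partition simple graph}---the same transfer-matrix argument underlying \cite{Daoud14} extends to give $T(G^\square_{m,n}) = \Theta\!\bigl((m+n)\,c^{m+n}\bigr)$ and $T(X_i) = \Theta(c^{|X_i|})$ for some constant $c>1$ depending only on the frame's width. Multiplying over the $k$ parts,
\[
\mu^*_k(X) \;=\; \prod_{i=1}^k T(X_i) \;=\; \Theta(c^{m+n})\cdot \Theta(1)^k,
\]
so $Z_{\mu^{\text{balanced}}_k}\ge \mu^*_k(X) = \Omega(c^{m+n})\cdot \Theta(1)^k$. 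Substituting into the upper bound for $Z_{\mu^*_k}$, the $c^{m+n}$ factors cancel and the extra $(m+n)$ prefactor from $T(G^\square_{m,n})$ promotes the denominator from $(m+n)^{k-1}$ to $(m+n)^k$, yielding
\[
\frac{Z_{\mu^{\text{balanced}}_k}}{Z_{\mu^*_k}} \;\ge\; \Omega\!\left(\frac{1}{(m+n)^k\,\Theta(1)^k}\right),
\]
exactly as claimed.

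The main obstacle is certifying the spanning-tree asymptotics $T(G^\square_{m,n}) = \Theta((m+n)\,c^{m+n})$ and $T(X_i) = \Theta(c^{|X_i|})$: \cite{Daoud14} only treats $G_{m,n}$ and $C_{m,n}$ directly, so one must extend their transfer-matrix computation to the frame graph. Since the frame has bounded width along its cyclic direction, this extension is essentially mechanical---one writes down the corresponding constant-size transfer matrix and reads off its dominant eigenvalue---and introduces only new multiplicative constants, which are precisely what the $\Theta(1)^k$ factor in the statement is there to absorb. Once the asymptotics are in place, the rest of the argument is an almost verbatim reuse of the rectangular/cylindrical case.
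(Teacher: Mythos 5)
Your overall skeleton matches the paper's: bound $Z_{\mu^*_k}$ via \cref{lem:partition function bound} using $|V(G^\square_{m,n})| = \Theta(m+n)$, exhibit one balanced $k$-partition whose weight matches the exponential growth of $T(G^\square_{m,n})$ up to $\Theta(1)^k$ factors, and let the linear prefactor in the spanning-tree count of the frame promote the denominator from $(m+n)^{k-1}$ to $(m+n)^k$. Where you diverge is in how the spanning-tree asymptotics are certified, and that is essentially the only nontrivial content of this theorem beyond \cref{thm:percentage of balance partition simple graph}: the paper does explicit corner surgery, adding vertices and edges at the four corners to turn $G^\square_{m,n}$ into the double-cycle graph $C_{2,2(m+n-2)}$ (with an injective map on spanning trees) for the upper bound $T(G^\square_{m,n}) \le T(C_{2,2(m+n-2)})$, and contracts corner edges of each component $X_i$ to compare it with a piece of a double-cycle for the per-part lower bound; both double-cycle quantities are then read off from \cite{Daoud14} exactly as in \cref{thm:percentage of balance partition simple graph}. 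You instead assert that the transfer-matrix computation of \cite{Daoud14} ``extends mechanically'' to the frame. That is plausible, but it is precisely the step that has to be carried out, since handling the corners is what distinguishes this theorem from the previous one; as written, your proposal defers the crux rather than proving it.

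There is also a concrete slip in your description of the graph that propagates into your asymptotics. The graph the paper actually works with is an annulus with an \emph{inner} and an \emph{outer} rectangle (this is why its slow-mixing argument speaks of inner and outer gaps, and why the comparison graph is $C_{2,2(m+n-2)}$, which has $4(m+n-2)$ vertices); it is not the width-one boundary cycle on $2m+2n-4$ vertices that your vertex count describes. If the frame really were a single cycle, your intermediate claims would be false: $T(G^\square_{m,n})$ would equal $|V|$ rather than $\Theta((m+n)c^{m+n})$ with $c>1$, and each arc would have exactly one spanning tree. Moreover your exponents do not match as written: you take $T(X_i)=\Theta(c^{|X_i|})$ with $\sum_i |X_i| = |V| \ne m+n$, yet cancel against $c^{m+n}$. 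The cancellation works only if the whole-graph upper bound and the per-part lower bounds are expressed with the same base and the same total exponent (e.g. in terms of the perimeter), with the $O(1)$ losses at cuts and corners absorbed into $\Theta(1)^k$ --- which is exactly what the paper's corner-modification comparisons deliver. So the architecture is right, but the corner and width bookkeeping, which is the substance of this theorem, is left unproved and partially mis-stated.
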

\begin{proof}
To get an upper bound on $T(G^\square_{m, n})$, we can slightly modify the four corners of $G^\square_{m, n}$ by adding some vertices and edges, as is shown in Figure~\ref{fig:corner_1} to \ref{fig:corner_2}. Note that such process turns $G^\square_{m, n}$ into $C_{2, 2(m+n-2)}$, while not decreasing the number of spanning trees. This can be shown by constructing an injective map from the set of spanning tree in $G^\square_{m, n}$ to that of $C_{2, 2(m+n-2)}$. Therefore, we have
\[T(G^\square_{m, n}) \le T(C_{2, 2(m+n-2)}) = n \Theta(c^{2(m+n)})\] for the same $c > 1$ in the proof of Theorem~\ref{thm:percentage of balance partition simple graph}.

\begin{figure}[h!]
\centering
\begin{subfigure}[b]{0.31\linewidth}
\centering\includegraphics[width=0.8\linewidth]{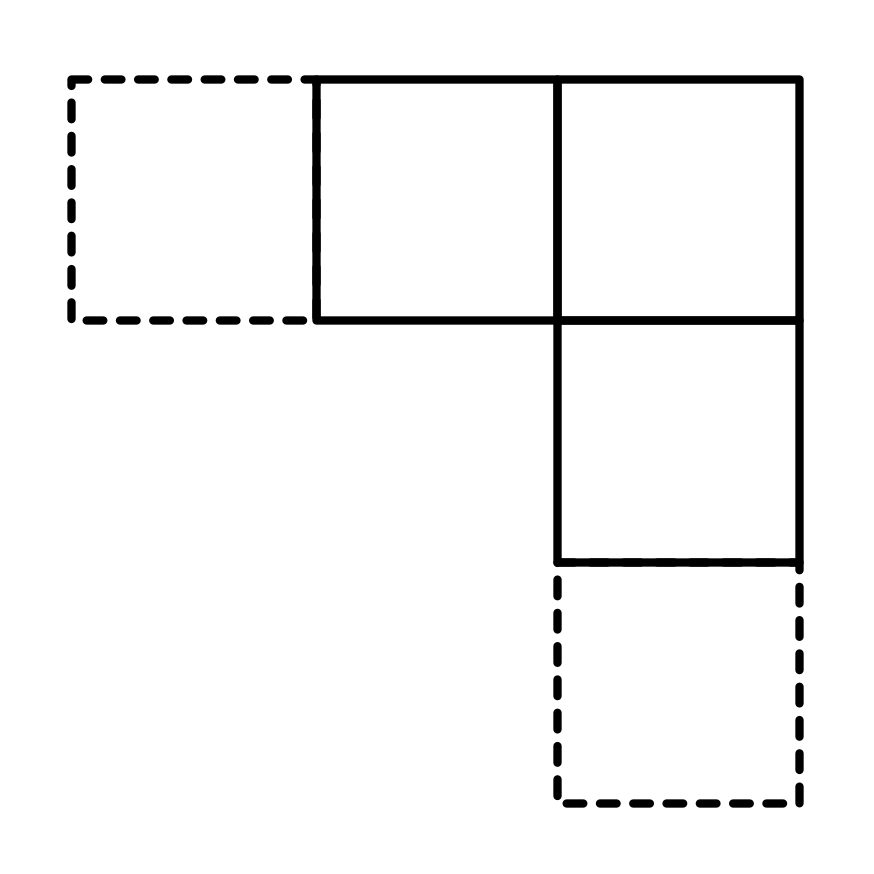}\caption{}\label{fig:corner_1}
\end{subfigure}
\hspace{0.01\linewidth}
\begin{subfigure}[b]{0.31\linewidth}
\centering\includegraphics[width=0.8\linewidth]{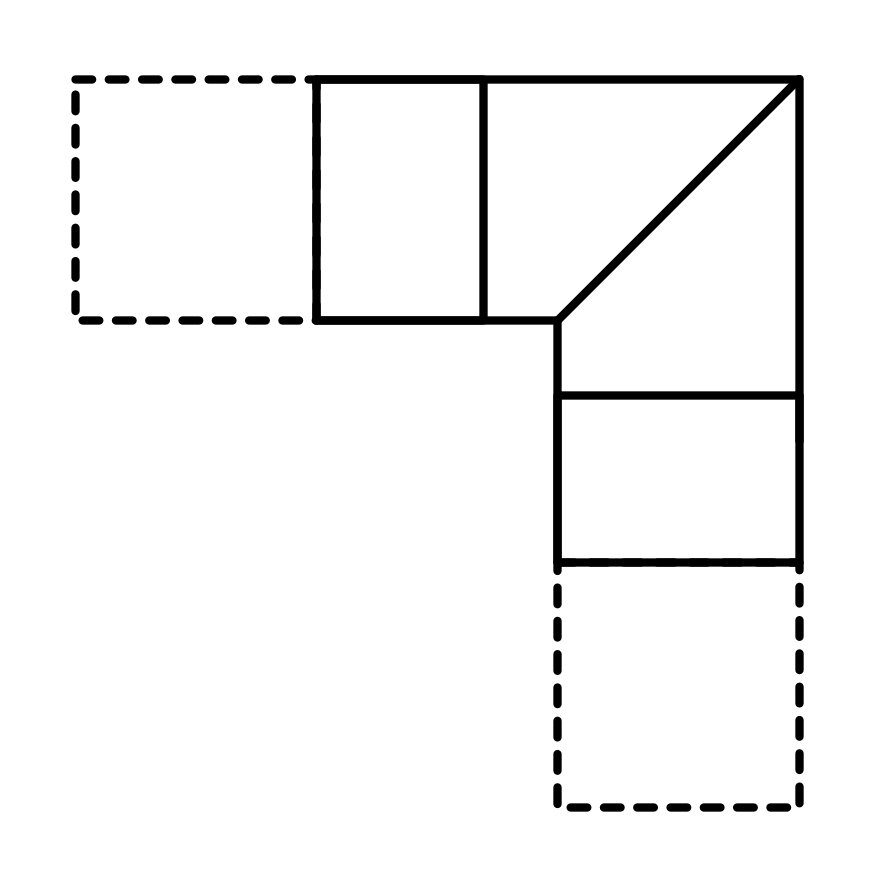}\caption{}\label{fig:corner_2}
\end{subfigure}
\hspace{0.01\linewidth}
\begin{subfigure}[b]{0.31\linewidth}
\centering\includegraphics[width=0.8\linewidth]{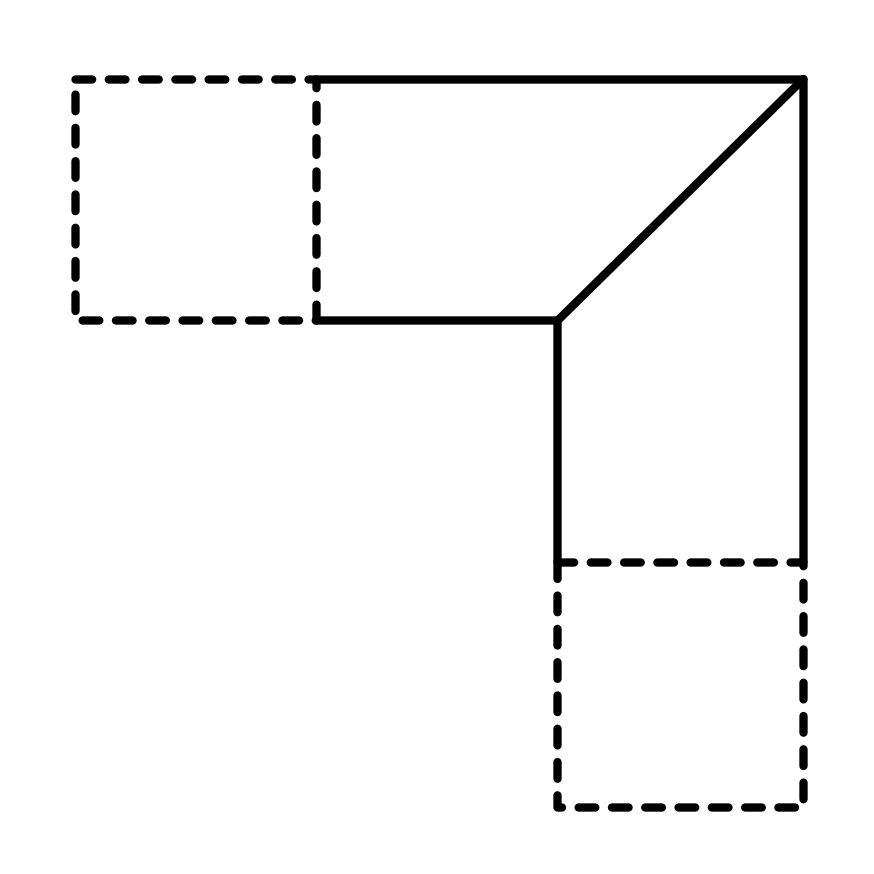}\caption{}\label{fig:corner_3}
\end{subfigure}
\caption{Modifications at the corners of grid-with-a-hole graphs.}\label{fig:corner}
\end{figure}

On the other hand, pick any maximum-weight $k$-partition $X = (X_1, X_2, \dots, X_k)$ of $G^\square_{m, n}$, of which each component $X_i$ should contain $2(m+n) / k - O(1)$ edges between the inner rectangle and the outer rectangle of $G^\square_{m, n}$.
To get an lower bound on any component $X_i$, we again slightly modify the corners in $X_i$ (if any) by merging/contracting some vertices and edges, as is shown in Figure~\ref{fig:corner_1} to \ref{fig:corner_3}. This process does not increase the number of spanning trees, while turning $X_i$ into a subgraph $Y_i$ of the double-cycle graph $C_{2, 2(m+n-4)}$ having $2(m+n) / k - O(1)$ edges between the two cycles. Since $T(Y_i) = \Theta(c^{2(m+n)/k})$ for the same $c$, we know that
\[\mu^{\text{balanced}}_k(X) = \prod_{i=1}^k\Theta\left(c^{2(m+n)/k}\right) = \Theta\left(c^{2(m+n)} \Theta(1)^k\right).\]
The rest of the proof follows from that of Theorem~\ref{thm:percentage of balance partition simple graph}.
\end{proof}
\PrintBibliography

@article{Daoud14,
author = {Daoud, S.N.},
year = {2014},
month = {07},
pages = {},
title = {Generating formulas of the number of spanning trees of some special graphs},
volume = {129},
journal = {The European Physical Journal Plus},
doi = {10.1140/epjp/i2014-14146-7}
}

@article{logconcave2,
  author    = {Nima Anari and
               Kuikui Liu and
               Shayan Oveis Gharan and
               Cynthia Vinzant},
  title     = {Log-Concave Polynomials {II:} High-Dimensional Walks and an {FPRAS}
               for Counting Bases of a Matroid},
  journal   = {CoRR},
  volume    = {abs/1811.01816},
  year      = {2018},
  url       = {http://arxiv.org/abs/1811.01816},
  eprinttype = {arXiv},
  eprint    = {1811.01816},
  timestamp = {Thu, 22 Nov 2018 17:58:30 +0100},
  biburl    = {https://dblp.org/rec/journals/corr/abs-1811-01816.bib},
  bibsource = {dblp computer science bibliography, https://dblp.org}
}

@article{bobkov2006modified,
  title={Modified logarithmic Sobolev inequalities in discrete settings},
  author={Bobkov, Sergey G and Tetali, Prasad},
  journal={Journal of Theoretical Probability},
  volume={19},
  number={2},
  pages={289--336},
  year={2006},
  publisher={Springer}
}

@article{DS94,
author = {Persi Diaconis and Laurent Saloff-Coste},
title = {{Comparison Theorems for Reversible Markov Chains}},
volume = {3},
journal = {The Annals of Applied Probability},
number = {3},
publisher = {Institute of Mathematical Statistics},
pages = {696 -- 730},
keywords = {Bernoulli-Laplace diffusion, Eigenvalues, Exclusion process, Markov chains, Poincare inequalities},
year = {1993},
doi = {10.1214/aoap/1177005359},
URL = {https://doi.org/10.1214/aoap/1177005359}
}

@book{LP17,
  title={Markov chains and mixing times},
  author={Levin, David A and Peres, Yuval},
  volume={107},
  year={2017},
  publisher={American Mathematical Soc.}
}

@article{DeFord2021Recombination,
	author = {DeFord, Daryl and Duchin, Moon and Solomon, Justin},
	journal = {Harvard Data Science Review},
	number = {1},
	year = {2021},
	month = {mar 31},
	note = {https://hdsr.mitpress.mit.edu/pub/1ds8ptxu},
	publisher = {},
	title = {Recombination: A {Family} of {Markov} {Chains} for {Redistricting}},
	volume = {3},
}

@unpublished{ReversibleReCom,
    author = {Cannon, Sarah and Duchin, Moon and Randall, Dana and Rule, Parker},
    title = {A reversible recombination chain for graph partitions},
    year = {2020},
    note = {preprint}
}

@article{autry2021metropolized,
  title={Metropolized multiscale forest recombination for redistricting},
  author={Autry, Eric A and Carter, Daniel and Herschlag, Gregory J and Hunter, Zach and Mattingly, Jonathan C},
  journal={Multiscale Modeling \& Simulation},
  volume={19},
  number={4},
  pages={1885--1914},
  year={2021},
  publisher={SIAM}
}

@article{herschlag2020quantifying,
  title={Quantifying gerrymandering in north carolina},
  author={Herschlag, Gregory and Kang, Han Sung and Luo, Justin and Graves, Christy Vaughn and Bangia, Sachet and Ravier, Robert and Mattingly, Jonathan C},
  journal={Statistics and Public Policy},
  volume={7},
  number={1},
  pages={30--38},
  year={2020},
  publisher={Taylor \& Francis}
}

@article{fifield2020automated,
  title={Automated redistricting simulation using Markov chain Monte Carlo},
  author={Fifield, Benjamin and Higgins, , Michael and Imai, Kosuke and Tarr, Alexander},
  journal={Journal of Computational and Graphical Statistics},
  volume={29},
  number={4},
  pages={715--728},
  year={2020},
  publisher={Taylor \& Francis}
}

@article{mccartan2020sequential,
  title={Sequential Monte Carlo for sampling balanced and compact redistricting plans},
  author={McCartan, Cory and Imai, Kosuke},
  journal={arXiv preprint arXiv:2008.06131},
  year={2020}
}

@article{dyer1985complexity,
  title={On the complexity of partitioning graphs into connected subgraphs},
  author={Dyer, Martin E and Frieze, Alan M},
  journal={Discrete Applied Mathematics},
  volume={10},
  number={2},
  pages={139--153},
  year={1985},
  publisher={Elsevier}
}

@article{cohen2020computational,
  title={On the computational tractability of a geographic clustering problem arising in redistricting},
  author={Cohen-Addad, Vincent and Klein, Philip N and Marx, D{\'a}niel},
  journal={arXiv preprint arXiv:2009.00188},
  year={2020}
}

@misc{FW22,
  doi = {10.48550/ARXIV.2206.00579},
  
  url = {https://arxiv.org/abs/2206.00579},
  
  author = {Frieze, Alan and Pegden, Wesley},
  
  keywords = {Probability (math.PR), Data Structures and Algorithms (cs.DS), Combinatorics (math.CO), FOS: Mathematics, FOS: Mathematics, FOS: Computer and information sciences, FOS: Computer and information sciences, 60J10},
  
  title = {Subexponential mixing for partition chains on grid-like graphs},
  
  publisher = {arXiv},
  
  year = {2022},
  
  copyright = {arXiv.org perpetual, non-exclusive license}
}

@article{najt2019complexity,
  title={Complexity and geometry of sampling connected graph partitions},
  author={Najt, Lorenzo and DeFord, Daryl and Solomon, Justin},
  journal={arXiv preprint arXiv:1908.08881},
  year={2019}
}

@inproceedings{procaccia2022compact,
  title={Compact Redistricting Plans Have Many Spanning Trees},
  author={Procaccia, Ariel D and Tucker-Foltz, Jamie},
  booktitle={Proceedings of the 2022 Annual ACM-SIAM Symposium on Discrete Algorithms (SODA)},
  pages={3754--3771},
  year={2022},
  organization={SIAM}
}

@misc{duchin2021political,
  title={Political Geometry},
  author={Duchin, Moon and Walch, Olivia},
  year={2021},
  publisher={Springer}
}

@inproceedings{ALOVV21,
  title={Log-concave polynomials IV: approximate exchange, tight mixing times, and near-optimal sampling of forests},
  author={Anari, Nima and Liu, Kuikui and Gharan, Shayan Oveis and Vinzant, Cynthia and Vuong, Thuy-Duong},
  booktitle={Proceedings of the 53rd Annual ACM SIGACT Symposium on Theory of Computing},
  pages={408--420},
  year={2021}
}

@article{Choe_2004,
	doi = {10.1016/s0196-8858(03)00078-2},
  
	url = {https://doi.org/10.1016%2Fs0196-8858%2803%2900078-2},
  
	year = 2004,
	month = {jan},
  
	publisher = {Elsevier {BV}
},
  
	volume = {32},
  
	number = {1-2},
  
	pages = {88--187},
  
	author = {Young-Bin Choe and James G. Oxley and Alan D. Sokal and David G. Wagner},
  
	title = {Homogeneous multivariate polynomials with the half-plane property},
  
	journal = {Advances in Applied Mathematics}
}
\end{document}